\definecolor{darkred}{rgb}{0.8,0.1,0.1}
\definecolor{Gray}{gray}{0.92}
\definecolor{Gray2}{gray}{0.75}
\definecolor{maroon}{cmyk}{0,0.87,0.68,0.32}
\newtheorem{definition}{Definition}
\newtheorem{proposition}{Proposition}
\newtheorem{lemma}[proposition]{Lemma}
\newtheorem{theorem}{Theorem}
\newtheorem{remark}{Remark}
\newtheorem{example}[proposition]{Example}
\newtheorem{corollary}{Corollary}
\newenvironment{proof}{\noindent \textit{{Proof.}~}}{\hfill $\square$}
\def\squareforqed{\hbox{\rlap{$\sqcap$}$\sqcup$}}
\def\qed{\ifmmode\squareforqed\else{\unskip\nobreak\hfil
		\penalty50\hskip1em\null\nobreak\hfil\squareforqed
		\parfillskip=0pt\finalhyphendemerits=0\endgraf}\fi}
\def\endenv{\ifmmode\;\else{\unskip\nobreak\hfil
		\penalty50\hskip1em\null\nobreak\hfil\;
		\parfillskip=0pt\finalhyphendemerits=0\endgraf}\fi}
\newcommand{\bra}[1]{\langle#1|}
\newcommand{\ket}[1]{|#1\rangle}
\def\Dbar{\leavevmode\lower.6ex\hbox to 0pt
	{\hskip-.23ex\accent"16\hss}D}
\begin{document}
	\title{Probabilistic Entanglement Distillation: Error Exponents via Postselected Quantum Hypothesis Testing Against Separable States}
	
	%\author{Xian Shi}\email[]
	%	{shixian01@gmail.com}
	%	\affiliation{College of Information Science and Technology,
		%	Beijing University of Chemical Technology, Beijing 100029, China}
	
	\author{Xian Shi
		\thanks{X. Shi is with College of Information Science and Technology,
			Beijing University of Chemical Technology, Beijing 100029, China.(email:\,shixian01@buct.edu.cn).
	}}
	
	%\author{Yi Shen}\email[]
	%{yishen@buaa.edu.cn}
	%\affiliation{School of Mathematics and Systems Science, Beihang University, Beijing 100191, China}
	%
	%\author{Yize Sun}
	%\affiliation{School of Mathematics and Systems Science, Beihang University, Beijing 100191, China}
	
	%\author{Lijun Zhao}
	%\affiliation{School of Mathematics and Systems Science, Beihang University, Beijing 100191, China}
	
	%\author{Yumin Guo}
	%\affiliation{School of Mathematical Sciences, Capital Normal University, Beijing 100048, China}
	\maketitle
	\date{\today}
	\begin{abstract}
		Entanglement distillation is a fundamental task in quantum entanglement theory. While recent progress has clarified limitations of probabilistic transformations in general resource theories, an analytic formula for the error exponent of probabilistic entanglement distillation under approximately (dually) nonentangling operations has remained unavailable.  This work considers the error exponents of probabilistic entanglement distillation under the finite block length scenario and zero rate scenario when the operational model is  $\delta$-approximately nonentangling  or $\delta$-approximately dually nonentangling quantum instruments. 
		Building on the framework of postselected quantum hypothesis testing, we establish a direct connection between probabilistic distillation and postselected hypothesis testing against the set of separable states. In particular, we derive an analytical characterization of the distillation error exponent under $\delta$-(dually) approximately nonentangling quantum instruments.

	\end{abstract}

	\IEEEpeerreviewmaketitle

	\section{Introduction}
	Entanglement is a central resource in quantum information theory, underpinning fundamental tasks such as quantum key distribution, teleportation, and superdense coding \cite{horodecki2009quantum,plenio2014introduction,ekert1991quantum,bennett1992communication,bennett1993teleporting}.  A core objective is to quantify entanglement in an operationally meaningful way, and two canonical figures of merit are entanglement distillation and entanglement cost \cite{bennett1996mixed}. While these tasks are traditionally defined under local operations and classical communication (LOCC), the set of LOCC operations lacks a tractable characterization, which makes sharp information-theoretic analysis difficult \cite{Chitambar2014Everything}.  This has motivated the study of analytically convenient relaxations—such as PPT-preserving maps and (approximately) non-entangling operations—that retain operational relevance while enabling tight bounds and asymptotic characterizations \cite{peres,harrow2003,chitambar2020entanglement,brandao2010reversible,brandao2011one,regula2019one,fang2019non,chitambar2020entanglement,lami2023no,lami2024distillable,rippchen2025fundamental}. In particular, (approximately) non-entangling (NE) and (approximately) dually non-entangling (DNE) operations have recently emerged as a powerful framework for studying distillation and cost beyond LOCC \cite{lami2024distillable,shi2025,streltsov2025}.  Building on the information-theoretic paradigm that connects reliability limits to quantum hypothesis testing and error exponents \cite{bjelakovic2005quantum,notzel2014hypothesis,hayashi2025,hayashi2025general}, recent work further related the distillation error exponent under NE operations to composite entanglement testing exponents (Sanov-type) \cite{lami2024}. This development suggests that tools and intuitions from the resource detection theory into resource manipulations.

	A complementary line of research considers probabilistic (postselected) transformation protocols, where one allows a non-unit success probability in exchange for improved transformation accuracy. Probabilistic state transformations have been investigated in general resource theories \cite{regula2022probabilistic,regula2022tight,regula2023overcoming,lami2024rever}. In particular, Regula addressed general methods to characterize the transformations of quantums states with the aid of probabilistic protocols, there the author also presented the trade-off between the success probability and the errors of the transformations between two states \cite{regula2022probabilistic,regula2022tight}, and subsequent work provided exact asymptotic characterizations of probabilistic transformation limitations \cite{regula2023overcoming}. Nevertheless, to the best of our knowledge, an explicit analytic characterization of the error exponent for probabilistic entanglement distillation with zero rate under approximately non-entangling and approximately dually non-entangling operation models has remained open. The goal of this paper is to fill this gap by providing an information-theoretic exponent characterization for postselected entanglement distillation (and related cost quantities) under these approximately non-entangling frameworks.

	We consider probabilistic entanglement manipulation under instruments that have limited entangling power. Following the approximate nonentangling models, an instrument 
	$\mathcal{E}=\{\mathcal{E}_i\}$ is 
	$\delta$-approximately nonentangling if, for every separable input 
	$\sigma$, each subchannel output 
	$\mathcal{E}_i(\sigma)$ remains within entanglement budget 
	$\delta$ as quantified by the Hilbert projective metric to the separable set; additionally, 
	$\delta$-approximately dually nonentangling instruments impose a dual (cone-preserving) constraint. 
	
	For distillation, given blocklength 
	$n$ and target maximally entangled state 
	$\Psi_{m_n}$, we allow postselection on an instrument outcome 
	$i$, and require the postselected state 
	$\frac{\mathcal{E}_i(\rho^{\otimes n})}{\mathrm{ tr}\mathcal{E}_i(\rho^{\otimes n})}$ to achieve fidelity at least 
	$1-\epsilon_n$  with $\Psi_{m_n}.$ The performance metric is the optimal asymptotic error exponent 
	$-\frac{1}{n}\log\epsilon_n$, optimized over all feasible instruments and postselected outcomes, and further optimized in the limit 
	$n\rightarrow\infty$. 
	Our main results show that these operational error exponents admit exact information-theoretic characterizations in terms of reversed postselected hypothesis testing against the separable set. In particular, under 
	$\delta$-approximately nonentangling instruments, the distillation error exponent equals the regularization of reversed postselected-testing quantity defined via the Hilbert projective metric; under 
	$\delta$-approximately dually nonentangling instruments, the exponent equals the corresponding quantity with measurements restricted to be separable $(\mathbb{SEP}). $  
	
	We plot figure \ref{fig2} to summarize the conceptual bridge of this manuscript.

	\emph{Contributions:} This paper develops an information-theoretic framework for probabilistic entanglement manipulation under constraints that limit entangling power:
	\begin{itemize}
		\item[1.] (Exponent characterization under $\delta$-ANE instruments) We first present an analytical formula of the error exponent of entanglement distillation for the finite block length under $\delta$-approximately nonentangling instruments with the help of postselected quantum composite hypothesis testing.  We also prove that the optimal asymptotic error exponent of probabilistic entanglement distillation under $\delta$-approximately nonentangling instruments admits an exact characterization:
		\begin{align*}
			E^{\mathcal{NE}_{\delta}}_{d,err,p}(\rho_{AB})=&\hat{D}_{\Omega,Sep}^{reg}(\rho).
		\end{align*}
		Here $\hat{D}_{\Omega,Sep}^{reg}(\rho)$ is a regularized reversed postselected-testing quantity defined via $D_{\Omega}$(Hilbert projective metric-based). 
		\item[2.] (Separable-measurement counterpart under $\delta$-ADNE instruments) Under the $\delta$-approximately dually nonentangling instruments, based on an analogous exponent characterization, we present an analytical expression of the error exponent with the SEP-measurement-restricted version of the reversed postselected quantum composite testing quantity:
		
		\begin{align*}
			E_{d,err,p}^{\mathcal{DNE}_{\delta}}(\rho_{AB})=\hat{D}_{\Omega,Sep}^{reg,\mathbb{SEP}}(\rho).
		\end{align*}
		
		\item[3.] (A closed-form example) For Werner states, we provide an explicit evaluation of the exponent under the $\delta$-ANE instruments.
		
	\end{itemize}
	
	\begin{figure}[htbp] % 强制当前位置
		\centering
		\includegraphics[width=0.7\textwidth]{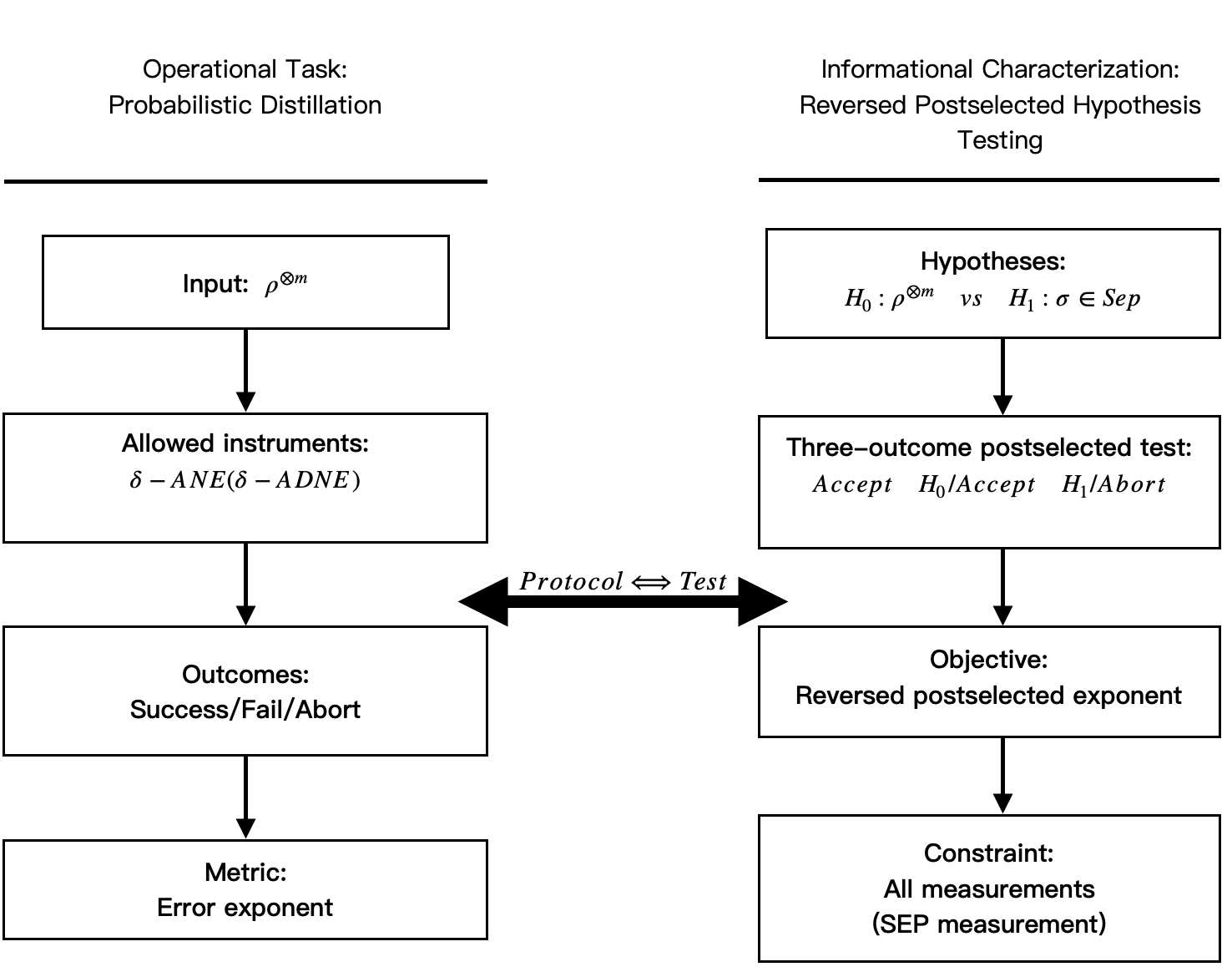} % 替换为你的图片名
		\caption{Relations between the probabilistic entanglement distillation and postselected quantum hypothesis testing. }
		\label{fig2}
	\end{figure}
	
	\emph{Proof ideas:} \hspace{3mm} The bridge between probabilistic manipulation and postselected hypothesis testing \cite{regula2023postselected} is built via three-outcome postselected tests $M=(M_0,M_1,M_2)$, where outcome 1 and 2 correspond to deciding for two hypotheses and outcome 0 corresponds to abstention; performance is measured by conditional type-I/type-II errors conditioned on non-abstention. For the converse (meta-converse), any feasible probabilistic distillation protocol under 
	$\delta$-approximately (dually) nonentangling instruments induces a feasible postselected test against separable states, yielding an upper bound on the achievable exponent in terms of a reversed postselected-testing quantity. For achievability, we show how to construct a subchannel from a feasible postselected test so that the induced fidelity constraint matches the conditional-error constraint, leading to the matching lower bound and hence the exponent equalities stated in Theorems 1 and 2.  The 
	$\delta$-approximately dually nonentangling characterization additionally incorporates the dual constraint, which aligns the operational restriction with separable-measurement-restricted postselected testing. 
	
	\emph{Relation to prior work:} \hspace{3mm}Recent progress has clarified fundamental limitations of probabilistic transformations in general resource theories \cite{regula2022probabilistic,regula2022tight,regula2023overcoming}, yet an analytic exponent formula for probabilistic entanglement distillation under approximately (dually) non-entangling operations has remained unavailable. Our results close this gap by establishing an operational equivalence between probabilistic entanglement distillation and reversed postselected hypothesis testing against separable states. Comparing with \cite{lami2024distillable,lami2024}, we address the entanglement distillation under the probabilistic transformations. Compared with \cite{regula2023postselected}, our contribution is to identify a concrete operational task (probabilistic entanglement manipulation under limited entangling power) for which the optimal error exponent is exactly captured by a reversed postselected testing quantity (and, under $\delta$-ADNE, by its separable-measurement counterpart).
	
	\emph{Organization}. Section \ref{pk} introduces notation and reviews the operational models of (approximately) nonentangling and dually nonentangling instruments, as well as the probabilistic distillation exponents considered in this work. Section \ref{m} establishes the operational equivalence between probabilistic entanglement distillation under approximate (dually) nonentangling models and reversed postselected hypothesis testing, and presents the resulting exponent characterizations. Proofs of the main theorems and auxiliary properties of the Hilbert projective metric are deferred to the Appendix \ref{app}.

	\section{Preliminary Knowledge}\label{pk}
	
	Assume $\mathcal{H}_A$ is a Hilbert space with finite dimensions, which is relevant to the quantum system $A$. Let ${Herm}_{\mathcal{H}_A}$ and ${PSD}_{\mathcal{H}_A}$ be the set of Hermitian operators and positive semidefinite operators acting on $\mathcal{H}_A$, respectively. A quantum state is positive semidefinite with trace 1. Let $\mathcal{D}(\mathcal{H})$ be the set of quantum states acting on $\mathcal{H}$, $\mathcal{D}(\mathcal{H})=\{\rho|\rho\ge 0,\mathrm{ tr}\rho=1\}$. And $\overline{\mathcal{D}(\mathcal{H})}$ is the set of substates, which are semidefinite positive operators with trace less than 1, $\overline{\mathcal{D}(\mathcal{H})}=\{\gamma|\gamma\ge0,\mathrm{ tr}\gamma\le 1\}.$ Here $\gamma\ge \varphi$ denotes that $\gamma-\varphi$ is positive semidefinite. Assume $\vartheta$ is an operator of $\mathcal{H}_A$,{ let $ker(\vartheta)=\{\ket{\psi}|\vartheta|{\psi}\rangle=0\}$ be the kernel of $\vartheta,$ and $supp(\vartheta)=ker(\vartheta)^{\perp}$ denotes the support of $\vartheta$. } 
	
	A quantum channel $\Delta_{A\rightarrow B}$ is a completely positive and trace-preserving linear map from $\boldsymbol{D}_A$ to $\boldsymbol{D}_B$, and we denote {$\boldsymbol{CP}_{A\rightarrow B}$ as the set of completely positive and trace nonincreasing maps from $A$ to $B$ and $\boldsymbol{C}_{A\rightarrow B}$ as the set of quantum channels from $A$ to $B$.} In cases where no ambiguity arises, we generally denote $\boldsymbol{CP}$ and $\boldsymbol{C}$ as the set of all completely positive and trace-nonincreasing maps and channels, respectively.

	A positive operator valued measurement (POVM) $\{M_i|i=1,2,\cdots,k\}$ is a set of positive semidefinite operators with $k$ outcomes and $\sum_i M_i=\mathbb{I}.$ Here we denote $\mathcal{M}_k$ as the set of POVMs with $k$ outcomes. When a POVM applies to a state $\rho$, the probability to get the $r$-th outcome is given by 
	$P(r|\rho)=\mathrm{tr}M_r\rho.$ Moreover, each POVM $\{M_i\}_{i=1}^k$ can be regarded as a channel 
	\begin{align}
		\Lambda(\rho)=\sum_i\mathrm{tr}(M_i \rho)\ket{i}\bra{i},
	\end{align}
	which transforms a quantum state into a state acting on a classical system.

	Next we recall the Hilbert projective metric between two states $\rho$ and $\sigma$ $D_{\Omega}(\rho,\sigma)$ \cite{hilbert2011}, which is defined as follows,
	\begin{equation*}
		D_{\Omega}(\rho,\sigma)=\begin{cases}\log\inf\{ \gamma| \rho\le \eta\sigma\le \gamma\rho,\eta,\gamma\ge 0\}, & supp(\rho)=supp(\sigma)\\
			\infty,&\textit{otherwise}
		\end{cases}
	\end{equation*}
	Properties of $D_{\Omega}(\rho,\sigma)$ needed are presented in Lemma \ref{l1}.

	Assume $\mathcal{H}_{AB}$ is a bipartite system with finite dimensions. A state $\rho$ is separable if it can be written as $\rho=\sum_ip_i\rho_i^A\otimes\rho_i^B$, otherwise, it is entangled. Here we denote the set of separable states of $\mathcal{H}_{AB}$ as $Sep_{A:B}$. Besides, we denote $\overline{Sep}_{A:B}$ as the set of separable substates on $\mathcal{H}_{AB}$. When $dim(\mathcal{H}_A)=dim(\mathcal{H}_B)=d$, the maximally entangled state is $\ket{\psi}_d=\frac{1}{\sqrt{d}}\sum_{i=1}^d\ket{ii}$.

	Let $\mathcal{E}_i$ be completely positive and trace non-increasing, if  
	\begin{align*}
		\rho\in Sep_{A:B}\Longrightarrow	\frac{\mathcal{E}_i(\rho)}{\mathrm{tr}\mathcal{E}_i(\rho)}\in Sep_{A:B}, \forall i,
	\end{align*}
	then $\mathcal{E}_i$ is nonentangling($\mathcal{NE}$).  If $\{\mathcal{E}_i\}$ is a set of $\mathcal{\mathcal{NE}}$ subchannels and $\mathrm{tr}\sum_i\mathcal{E}_i(\cdot)=\mathrm{ tr}(\cdot),$ then $\{\mathcal{E}_i\}$ is a $\mathcal{NE}$ instrument. Here we denote the set of all such $\mathcal{NE}$ instruments as $\mathbb{O}_{\mathcal{NE}}.$ Next for a subchannel $\Lambda(\cdot)$, we can look at the Heisenberg picture, where $\Lambda^{\dagger}(\cdot)$ satisfies the following property, $\mathrm{ tr}X\Lambda(Y)=\mathrm{ tr}\Lambda^{\dagger}(X)Y$, for any $X$ and $Y.$ If $\Lambda$ satisfies the following property,
	\begin{align*}
		\Lambda(\rho)\in cone(Sep_{A:B}),\hspace{3mm}
		\Lambda^{\dagger}(\rho)\in cone(Sep_{A:B}),\hspace{3mm} \forall\rho\in Sep_{A:B},
	\end{align*}
	then we say $\Lambda$ is dually nonentangling($\mathcal{DNE}$). If $\{\Lambda_i\}$ is a set of subchannels with each $\Lambda_i$ $\mathcal{DNE}$ and $\mathrm{ tr}\sum_i\Lambda_i(\cdot)=\mathrm{ tr}(\cdot)$, $\{\Lambda_i\}$ is a $\mathcal{DNE}$ instruments. The set of all such $\mathcal{DNE}$ instruments are denoted as $\mathbb{O}_{\mathcal{DNE}}$. Following the work of Brandao and Plenio \cite{brandao2010reversible}, we present the definitions of $\delta$-approximately $\mathcal{NE}$ and $\mathcal{DNE}$ quantum instruments. 
	
	\begin{definition}
		Assume $\{\mathcal{E}_i\}$ is a set of subchannels such that $\sum_i\mathcal{E}_i$ is a channel. The $\delta$-approximately nonentangling quantum instrument, $\mathcal{E}=\{\mathcal{E}_i|\mathcal{E}_i\in \boldsymbol{CP}, \sum_i\mathcal{E}_i\in\boldsymbol{C}\}$, is defined as
		\begin{align*}
			\mathbb{O}_{\mathcal{NE}_{\delta}}=\{\mathcal{E}|D_{\Omega,\overline{Sep}}(\mathcal{E}_i(\sigma))\le \delta, \forall \mathcal{E}_i\in \mathcal{E},\forall\sigma\in Sep\},
		\end{align*}
		where $D_{\Omega,\overline{Sep}}(\sigma)$ is the Hilbert projective metric between $\sigma$ and the set of separable substates, whenever the support condition makes $D_{\Omega}(\cdot,\cdot)$ finite. Here $\delta$ represents the maximal entanglement that can be generated on separable inputs. Analogously, the the $\delta$-approximately dually nonentangling quantum instruments, $\mathcal{E}=\{\mathcal{E}_i|\mathcal{E}_i\in \boldsymbol{CP}, \sum_i\mathcal{E}_i\in\boldsymbol{C}\},$ is defined as
		\begin{align*}
			\mathbb{O}_{\mathcal{DNE}_{\delta}}=\{\mathcal{E}|\mathcal{E}_i^{\dagger}(Sep)\subset cone(Sep),\forall\mathcal{E}_i\in \mathcal{E}\}\cap\mathbb{O}_{\mathcal{NE}_{\delta}}.
		\end{align*}
		
	\end{definition}
	
 Here we address the following entanglement manipulation. Let $m\in \mathbb{N}$, the error exponent of the following entanglement manipulation is defined as follows,
 \begin{align*}
 E^{(m),\mathcal{F}_{\delta}}_{d,\mathrm{err},p}(\rho^{\otimes n})	=&\sup -\frac{1}{n}\log\epsilon_n\\
 	\textit{s. t.}	\hspace{6mm}&F\Big(\frac{\mathcal{E}^{(n)}_{i_n}(\rho^{\otimes n})}{tr[\mathcal{E}^{(n)}_{i_n}(\rho^{\otimes n})]},\Psi_{m}\Big)
 	\ge 1-\epsilon_n ,  \exists\ i_n,\\
 	&\mathcal{E}^{(n)}\in \mathbb{O}_{\mathcal{F}_{\delta}},\mathcal{F}=\{\mathcal{NE},\mathcal{DNE}\}.
 \end{align*}   The zero-rate probabilistic distillation error exponent under the $\mathcal{F}_{\delta}$ instruments $\{\mathcal{E}_i\}$ is defined as follows, for any $\{m_n\}_n$ with $\lim\limits_{n\rightarrow\infty}\frac{m_n}{n}=0,$
	\begin{align*}
		E^{\mathcal{F}_{\delta}}_{d,\mathrm{err},p}(\rho)
		:=& \sup \liminf_{n\to\infty} -\frac{1}{n}\log\epsilon_n\\
		\textit{s. t.}	\hspace{6mm}&F\Big(\frac{\mathcal{E}^{(n)}_{i_n}(\rho^{\otimes n})}{tr[\mathcal{E}^{(n)}_{i_n}(\rho^{\otimes n})]},\Psi_{m_n}\Big)
		\ge 1-\epsilon_n ,  \exists\ i_n,\\
		&\mathcal{E}^{(n)}\in \mathbb{O}_{\mathcal{F}_{\delta}},\mathcal{F}=\{\mathcal{NE},\mathcal{DNE}\}.
	\end{align*}
	where $\Psi_m=\ket{\psi}_m\bra{\psi}$, $F(\rho,\sigma)=||\sqrt{\rho}\sqrt{\sigma}||_1.$

	Throughout the manuscript, we denote $\log(\cdot)$ as $\log_2(\cdot)$.
	\section{Main results}\label{m}
	\textbf{Probabilistic entanglement distillation exponents}-- Now we will present our first main result, which provides an analytical formula for the probabilistic entanglement distillation under approximatley $\mathcal{NE}$ instruments. Here the dimension of target maximally entangled state is at least two.
	\begin{theorem}\label{t1}
			Assume $\rho_{AB}^{\otimes n}$ is a bipartite state on $\mathcal{H}_{AB}^{\otimes n}$. Let $m\in \mathbb{N}$, $\delta\ge 0$, and the distillation is limited to $\mathcal{NE}_{\delta}$. Then the error exponent of the entanglement distillation is 
		\begin{align*}
			E_{d,err,p}^{(m),\mathcal{NE}_{\delta}}(\rho^{\otimes n}_{AB})=\frac{1}{n}\hat{\beta}_{\frac{2^{\delta}}{2^{\delta}+m-1},Sep}(\rho_{AB}^{\otimes n}).
		\end{align*}

		Assume $\{m_n\}_n$ is a sequence of natural numbers with $\lim\limits_{n\rightarrow\infty}\frac{\log m_n}{n}=0,$ then
		\begin{align*}
			E_{d,err,p}^{\mathcal{NE}_{\delta}}(\rho_{AB})=\hat{D}_{\Omega,Sep}^{reg }(\rho_{AB}).
		\end{align*}
	\end{theorem}
	
	The proof of Theorem \ref{t1} is placed in the Appendix \ref{aa1}.
	\begin{remark}
		Without postselection and constrained nonentangling operations, the optimal exponent under $\mathcal{NE}$ constraints turns into Lemma 1 in \cite{lami2024}.
	\end{remark}
	
	\begin{example}\label{e1}
		Assume $\mathcal{H}_{AB}$ is a bipartite system with $dim(\mathcal{H}_A)=dim(\mathcal{H}_B)=d$, and $\rho_{AB}$ is the Werner state,
		\begin{align*}
			\rho_{p}=p\cdot\frac{2P_s}{d(d+1)}+(1-p)\cdot\frac{2P_{as}}{d(d-1)},
		\end{align*}
		here $P_s=\frac{I+F}{2}$, $P_{as}=\frac{I-F}{2}$, $F$ is the swap operator, $F=\sum_{ij}\ket{ij}\bra{ji}$. Then for each $n\in\mathbb{N}$,
		\begin{equation*}
			\hat{D}_{\Omega,Sep}^{reg}(\rho_p) = \begin{cases} 
				\log\frac{1-p}{p}\hspace{3mm}p<\frac{1}{2} \\
				0\hspace{12mm}p\ge\frac{1}{2}
			\end{cases}  .
		\end{equation*}
	\end{example}
	The proof of Example \ref{e1} is placed in the Appendix \ref{aa1}. 
	
	\begin{figure}[htbp] % 强制当前位置
		\centering
		\includegraphics[width=0.5\textwidth]{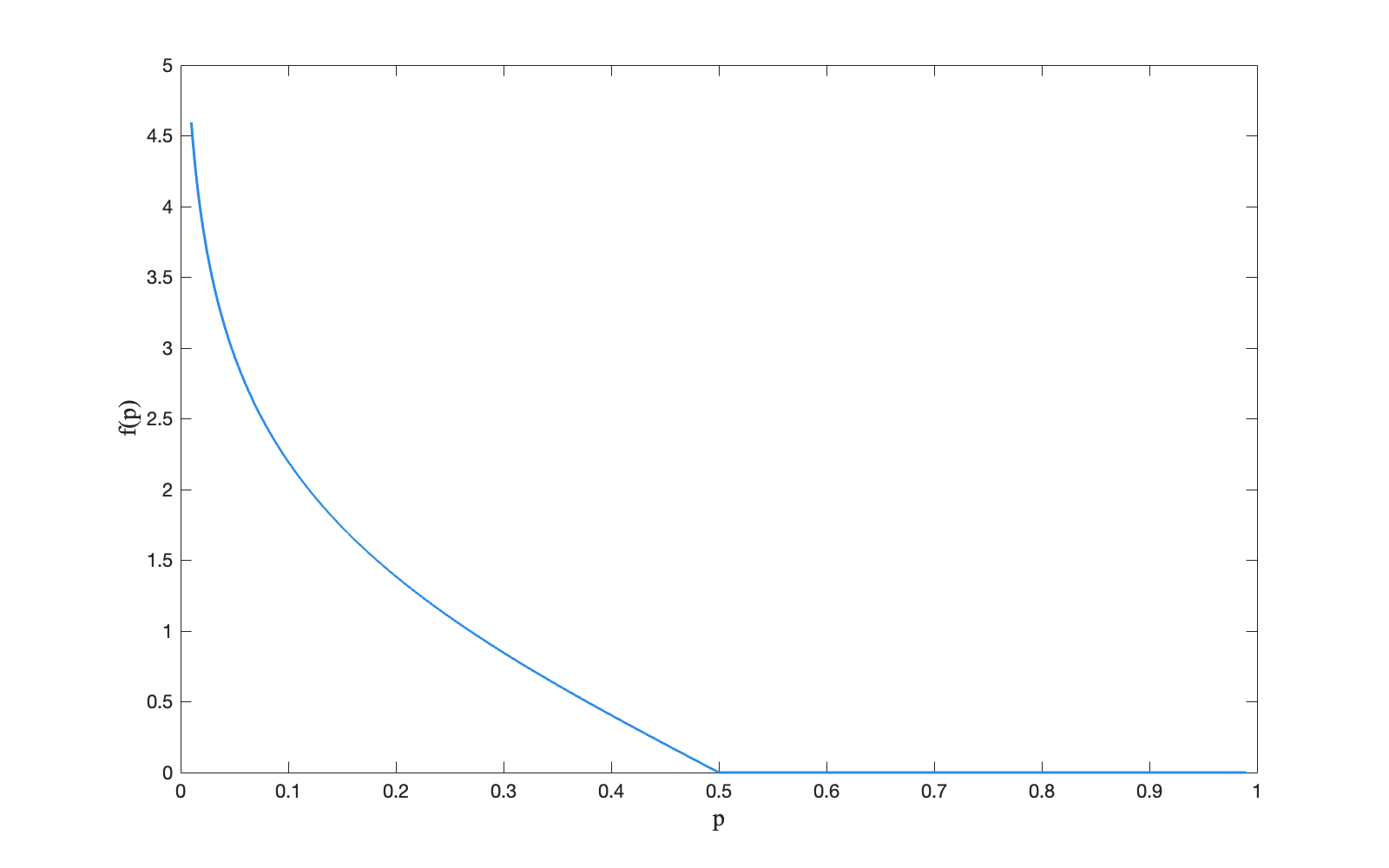} % 替换为你的图片名
		\caption{$E_{d,err,p}^{\mathcal{NE}}(\rho_p)=\max\{0,\log\frac{1-p}{p}\}$ as a function of $p \in [0,1]$ for Werner state $\rho_p.$}
		\label{fig1}
	\end{figure}
	Based on Lemma \ref{l1} and Theorem \ref{t1}, we have
	\begin{align*}
		E_{d,err,p}(\rho_p)=	  \begin{cases} 
			\log\frac{1-p}{p}\hspace{3mm}p<\frac{1}{2} \\
			0\hspace{12mm}p\ge\frac{1}{2}
		\end{cases}  .
	\end{align*}

	Next when constraining the probabilistic entanglement distillation of a bipartite state under $\mathcal{DNE}_{\delta}$ instruments, we show the exponent is equal to the postselected quantum hypothesis testing for the state against the separable states when the measurements are restricted to be separable.
	\begin{theorem}\label{th2}
		Assume $\rho_{AB}^{\otimes n}$ is a bipartite state on $\mathcal{H}_{AB}^{\otimes n}$. Let $\delta\ge 0$ and $\{m_n\}_n$ is a sequence of natural numbers. Then the error exponent of the entanglement distillation is 
	\begin{align*}
		\hat{\beta}^{\mathbb{SEP}}_{\frac{1}{m_n+1}+\frac{2^{\delta}m_n}{(m_n+1)(2^{\delta}+m_n-1)},Sep}(\rho_{AB}^{\otimes n})- \log\frac{m_n+1}{m_n}\ge	nE_{d,err,p}^{(m_n),\mathcal{DNE}_{\delta}}(\rho_{AB}^{\otimes n})\ge& \hat{\beta}_{\frac{2^{\delta}}{2^{\delta}+m_n-1},Sep}^{\mathbb{SEP}}(\rho_{AB}^{\otimes n}).
	\end{align*}
	Furthermore, when $\{m_n\}_n$ is sequence with $\lim\limits_{n\rightarrow\infty}\frac{\log m_n}{n}=0,$ let $n\rightarrow \infty,$
	\begin{align*}
		E^{\mathcal{DNE}_{\delta}}_{d,err,p}(\rho_{AB})=\hat{D}_{\Omega,Sep}^{reg,\mathbb{SEP}}(\rho_{AB}).
	\end{align*}
	\end{theorem}
	
	The proof of Theorem \ref{th2} is placed in the Appendix \ref{aa1}.

	\section{Conclusion}
	In this work we investigated probabilistic entanglement manipulation on the asymptotic error exponents of probabilistic entanglement distillation under $\delta$-approximately nonentangling and $\delta$-approximately dually nonentangling  quantum instruments. Our main contribution is to present explicit analytical characterizations of the distillation error exponent by linking the operational task to postselected quantum hypothesis testing against the set of separable states. 
Collectively, our results provide a unified information-theoretic framework via postselected hypothesis testing of the probabilistic entanglement processing under the approximately nonentangling and approximately dually nonentangling instruments. We also present an analytical formula of the error exponent for the Werner states under $\delta$-approximately nonentangling quantum instruments.
	
	Several open directions remain. It would be interesting to (i) extend the present characterization to other resources, such as, coherence \cite{streltsov2017colloquium,lami2019,luo2025}, thermodynamics \cite{vinjanampathy2016quantum,deffner2019quantum}. (ii) develop efficiently computable semidefinite programming formulations \cite{wang2025computable,lami2025computable} for the relevant postselected testing quantities in practically regimes, and (iii) explore strong-converse \cite{cheng2024strong,oufkir2025quantum,berta2025channel} and second-order \cite{li2014second} refinements of the obtained exponents. We hope that the connection established here between probabilistic entanglement manipulation and postselected hypothesis testing will serve as a useful tool for further progress in operational entanglement theory.
	\section{Acknowledgement}
	X. S. was supported by the National Natural Science Foundation of China (Grant No. 12301580).
	\bibliographystyle{IEEEtran}
	\bibliography{ref}

	\clearpage

	\section{Appendix}\label{app}
	
	In this paper, $\mathcal{H}$ is a Hilbert space with finite dimensions. Let $Pos_{\mathcal{H}}$ be the set of positive semidefinite operators. When $\rho$ is positive semidefinite and $\mathrm{ tr}\rho=1,$  $\rho$ is a state on $\mathcal{H}$, when $\rho^{'}$ is positive semidefinite and $\mathrm{ tr}\rho\le 1$, $\rho^{'}$ is a substate. Here we denote $\mathcal{D}(\mathcal{H})$ and $\overline{\mathcal{D}(\mathcal{H})}$ as the set of states and substates, respectively. Assume $\mathcal{N}:Pos_{\mathcal{H}}\rightarrow Pos_{\mathcal{H}}$ is a map, $\mathcal{N}$ is positive. For any $d\in \mathbb{N}$, $\mathcal{I}_d\otimes\mathcal{N}$ is positive, we say $\mathcal{N}$ is completely positive. If $\mathrm{ tr}\mathcal{N}(\cdot)=\mathrm{tr}(\cdot)$, $\mathcal{N}$ is trace-preserving. And if $\mathrm{ tr}\mathcal{N}(\cdot)\le \mathrm{ tr}(\cdot)$, $\mathcal{N}$ is trace-nonpreserving. When $\mathcal{N}$ is completely positive and trace-preserving, we say $\mathcal{N}$ is a channel.

	\subsection{Entanglement}
	
	Assume $\mathcal{H}_{AB}$ is a bipartite system. A state $\rho_{AB}$ is separable if it can be written as $$\rho=\sum_i p_i\rho_i^A\otimes\rho_i^B,$$ 
	here the states $\rho^A_i$ and $\rho_i^B$ are states on local systems $A$ and $B,$ respectively. Otherwise, $\rho_{AB}$ is entangled. We will denote the set of separable states of $\mathcal{H}_{AB}$ as $Sep_{A:B}$, or simply $Sep$ if there is no ambiguity regarding the system.otherwise, it is entangled. Besides, we denote $\overline{Sep}$ and $cone(Sep)=\{\lambda\sigma|\lambda>0,\sigma\in Sep\}$ as the set of separable substates and cone of separable states.
	
	An important method to detect whether a state is separable is the positive partial transpose(PPT) criterion \cite{peres}, which said any separable state $\rho_{AB}$ satisfies the following inequality $\rho_{AB}^{T_B}$. A bipartite state $\sigma$ satisfying the PPT criterion is called a PPT state. Furthermore, we can generalize the above concepts to the POVMs. A measurement $\mathsf{M}$ is said to be separable measurements if 
	\begin{align*}
		\mathsf{M}=\{M_x|\sum_x M_x=\mathbb{I}, M_x\in \overline{Sep}\}.
	\end{align*}
	Here we denote the set of all separable measurements as $\mathbb{SEP}.$ Besides, we denote $\mathbb{ALL}$ as the set of all measurements, $\mathbb{ALL}=\{\{M_x\}_x|\sum_xM_x=\mathbb{I},M_x\ge 0,\forall x\}$.
	
	\subsection{Quantum Relative Entropies}
	Assume $\rho$ and $\sigma$ are two states, let $\alpha\in (1,\infty]$, then the $\alpha$-sandwiched Renyi divergence $\tilde{D}_{\alpha}(\rho,\sigma)$ for $\rho$ and $\sigma$ is defined as 
	\begin{equation*}
		\tilde{D}_{\alpha}(\rho,\sigma)=
		\begin{cases}
			\frac{\alpha}{\alpha-1}\log||\sigma^{\frac{1-\alpha}{2\alpha}\rho\sigma^{\frac{1-\alpha}{2\alpha}}}||_{\alpha}\hspace{3mm} \textit{if $supp(\rho)\subseteq supp(\sigma)$,}\\
			+\infty \hspace{5mm} \textit{otherwise},
		\end{cases}
	\end{equation*}
	when $\alpha\rightarrow1$, $\tilde{D}_{\alpha}(\rho,\sigma)$ tends to the quantum relative entropy of $\rho$ and $\sigma$, ${D}(\rho||\sigma)=\mathrm{tr}[\rho(\log\rho-\log\sigma)].$ 
	
	Next we define the other quantum relative entropy for two states $\rho$ and $\sigma$ with $supp(\rho)\subseteq supp(\sigma)$, $D_{max}(\rho,\sigma),$
	
	\begin{align}
		D_{max}(\rho,\sigma)=\log\inf& \hspace{2mm}\lambda\label{dpmax}\\
		\textit{s. t.}\hspace{5mm}&\rho\le\lambda\sigma\nonumber\\
		&\lambda\in \mathbb{R}^{+},\nonumber
	\end{align}
	otherwise, $D_{max}(\rho,\sigma)$ tends to the infty. 
	The dual program of $(\ref{dpmax})$ is 
	\begin{align}
		D_{max}(\rho,\sigma)=\log\max&\hspace{3mm}\mathrm{ tr}\rho X\label{ddmax}\\
		\textit{s. t.}\hspace{5mm}&\mathrm{ tr}\sigma X\le 1,\nonumber\\
		&X\ge0.\nonumber
	\end{align}
	
	The Hilbert projective metric between two states $\rho$ and $\sigma$ is
	\begin{align*}
		D_{\Omega}(\rho,\sigma)=D_{max}(\rho,\sigma)+D_{max}(\sigma,\rho).
	\end{align*}Let $\Omega(\rho,\sigma)=2^{D_{\Omega}(\rho,\sigma)}$. Assume $\mathcal{F}$ is a set of convex and compact set of states on $\mathcal{H}$, and $\rho\in \mathcal{D}(\mathcal{H})$, then
	\begin{align*}
		D_{\Omega,\mathcal{F}}(\rho)=\min_{\sigma\in\mathcal{F}} D_{\Omega}(\rho,\sigma),
	\end{align*} 
	where the minimum takes over all the states in $\mathcal{F}$.

	Then we introduce some quantities necessary for the results we obtained below. Assume $\rho$ and $\sigma$ are two states acting on $\mathcal{H},$ trace norm and fidelity are common used tools to show the distances between $\rho$ and $\sigma$. The trace norm distance between $\rho$ and $\sigma$ is defined as
	\begin{align*}
		||\rho-\sigma||_1=&\mathrm{ tr}\sqrt{(\rho-\sigma)^{\dagger}(\rho-\sigma)}\\
		=&\max_{||B||_{\infty}\le 1}|\mathrm{ tr}B(\rho-\sigma)|.
	\end{align*}
	The fidelity between quantum states $\rho$ and $\sigma$ are defined as
	\begin{align*}
		F(\rho,\sigma)=\mathrm{ tr}\sqrt{\sqrt{\rho}\sigma\sqrt{\rho}}=||\sqrt{\rho}\sqrt{\sigma}||_1,
	\end{align*}
	when $\rho=\ket{\psi}\bra{\psi}$ is a pure state, $F(\ket{\psi},\sigma)=\bra{\psi}\sigma\ket{\psi}.$
	
	Next we present the following properties of $D_{\Omega}(\rho,\sigma)$.
	\begin{lemma}\label{l1}
		Assume $\rho$ and $\sigma$ are two states, then
		\begin{itemize}
			\item[(1.)] $D_{\Omega}(\rho,\sigma)\ge 0$, and the quality happens if and only if $\rho=\sigma$.
			\item[(2.)] $D_{\Omega}(\rho,\sigma)=D_{\Omega}(\sigma,\rho)$.
			\item[(3.)] For arbitrary positive numbers $\lambda$ and $\varphi$, then $D_{\Omega}(\rho,\sigma)=D_{\Omega}(\lambda\rho,\varphi\sigma).$
			\item[(4.)] The quantity $D_{\Omega}(\cdot,\cdot)$ satisifies the data-processing property under the positive map, that is, for each positive linear map $\mathcal{E}$, 
			\begin{align*}
				D_{\Omega}(\mathcal{E}(\rho),\mathcal{E}(\sigma))\le D_{\Omega}(\rho,\sigma).
			\end{align*}
			\item[(5.)] $D_{\Omega}(\rho,\sigma)$ can be computed under the semidefinite programming method, 
			\begin{align}
				D_{\Omega}(\rho,\sigma)=&\log\sup \mathrm{ tr}A\rho\label{f1}\\
				\textit{s. t.} \hspace{4mm}& \mathrm{ tr}B\rho=1,\nonumber\\
				&\mathrm{ tr}(B-A)\sigma\ge0,\nonumber\\
				&A,B\ge 0\nonumber
			\end{align}
			\item[(6.)] Assume $\rho$ and $\sigma$ are two states, $D_{\Omega}(\rho^{\otimes n},\sigma^{\otimes n})=nD_{\Omega}(\rho,\sigma)$.
			\item[(7.)] $D_{\Omega}(\rho,\sigma)<\infty$ is valid if and only if \emph{supp}$(\rho)=$\emph{supp}$(\sigma)$.  
		\end{itemize}
	\end{lemma}

	\subsection{ Postselected Quantum Hypothesis Testing}
	Quantum state discrimination is a fundamental quantum information task. Recently, the authors in \cite{regula2023postselected} addressed the following problem. Assume Alice receives a state, and she knows that the state is $\rho$ or $\sigma$, her aim is to determine which state she obtained. In the scenario, she can perform a three-outcome positive operator-valued measure (POVM), $M=\{M_1,M_2,M_0\}.$ The outcome 1 and 2 correspond to the state $\rho$ and $\sigma$, respectively, when the outcome is 0, we cannot make a decision. Then they defined the following quantities,
	\begin{align*}
		\textit{conditional type I error:}\hspace{5mm}\overline{\alpha}(M)=\frac{\mathrm{ tr}M_2\rho}{\mathrm{ tr}(M_1+M_2)\rho},\\
		\textit{conditional type II error:}\hspace{5mm}\overline{\beta}(M)=\frac{\mathrm{ tr}M_1\sigma}{\mathrm{ tr}(M_1+M_2)\sigma},
	\end{align*}
	Assume $\mathcal{F}$ is a convex and closed set of quantum states, and the postselected hypothesis testing between a state $\rho$ and the set $\mathcal{F}$ is 
	
	\begin{align}
		\overline{\beta}_{\epsilon,\mathcal{F}}(\rho)=-\log\inf_{M\in \mathcal{M}_3}\{\sup_{\sigma\in\mathcal{F}}\frac{\mathrm{ tr}M_1\sigma}{\mathrm{ tr}(M_1+M_2)\sigma}|\frac{\mathrm{ tr}M_2\rho}{\mathrm{ tr}(M_1+M_2)\rho}\le\epsilon\}\label{pht}
	\end{align}
	
	where $M$ takes over all the elements in $\mathcal{M}_3$, and $\mathrm{ tr}(M_1+M_2)\sigma,\mathrm{ tr}(M_1+M_2)\rho> 0.$

	Here we address a reversed problem of the composite postselected hypothesis testing. Assume $\mathcal{F}$ is a convex and compact set, $M\in\mathcal{M}_3$ is a feasible POVM, the conditional type $II$ error is defined as
	\begin{align*}
		\overline{\beta}(M)=\frac{\mathrm{tr}M_1\rho}{\mathrm{ tr}(M_1+M_2)\rho},
	\end{align*}
	while the conditional type $I$ error is defined as $$\overline{\alpha}_{\mathcal{F}}(M)=\sup_{\sigma\in\mathcal{F}}\frac{\mathrm{tr}M_2\sigma}{\mathrm{tr}(M_1+M_2)\sigma}.$$
	The reversed composite postselected hypothesis testing, $\hat{\beta}_{\epsilon,\mathcal{F}}(\rho)$, is defined as follows,
	\begin{align}
		\hat{\beta}_{\epsilon,\mathcal{F}}(\rho)=&-\log\inf_{M\in\mathcal{M}_3}\frac{\mathrm{tr}M_1\rho}{\mathrm{ tr}(M_1+M_2)\rho}\label{lf0}\\
		\textit{s. t.}&\hspace{4mm} \frac{\mathrm{tr}M_2\sigma}{\mathrm{tr}(M_1+M_2)\sigma}\le\epsilon, \forall\sigma\in\mathcal{F},\nonumber\\
		&\hspace{4mm}0\le M_1+M_2\le \mathbb{I}.\nonumber
	\end{align}
	
	Furthermore, when $\mathcal{M}_3$ in (\ref{lf0}) is in a class of $\mathbb{M}$, then we define $\hat{\beta}_{\epsilon,\mathcal{F}}^{\mathbb{M}}(\rho)$ as follows
	\begin{align}
		\hat{\beta}^{\mathbb{M}}_{\epsilon,\mathcal{F}}(\rho)=&-\log\inf_{M\in\mathcal{M}_3}\frac{\mathrm{tr}M_1\rho}{\mathrm{ tr}(M_1+M_2)\rho}\label{lfm}\\
		\textit{s. t.}&\hspace{4mm} \frac{\mathrm{tr}M_2\sigma}{\mathrm{tr}(M_1+M_2)\sigma}\le\epsilon, \forall\sigma\in\mathcal{F},\nonumber\\
		&\hspace{4mm}0\le M_1+M_2\le \mathbb{I}, M\in \mathbb{M}.\nonumber
	\end{align}
	
	The analytical formula of $\hat{\beta}_{\epsilon,\mathcal{F}}(\rho)$ is presented in the following corollary.
	
	\begin{corollary}\label{c1}
		Assume $\mathcal{F}$ is a convex and compact set of quantum states, then 
		\begin{align}
			{\hat{\beta}_{\epsilon,\mathcal{F}}(\rho)}
			=\log[1+\frac{\epsilon}{1-\epsilon}{\hat{\Omega}}_{\mathcal{F}}(\rho)]
		\end{align}
		
		When each family set $(\mathcal{F}_n)_n$ are convex and compact, and $(\mathcal{F}_n)_n$ is closed under tensor product, let $\{\epsilon_n\}_n$ be a sequence of positive numbers with $\lim\limits_{n\rightarrow\infty}\frac{1}{n}\log{\epsilon_n}=\lim\limits_{n\rightarrow\infty}\frac{1}{n}\log\frac{1}{1-\epsilon_n}=0,$ we have
		\begin{align}
			\lim\limits_{n\rightarrow \infty}\frac{1}{n}{\hat{\beta}_{\epsilon_n,\mathcal{F}}(\rho^{\otimes n})}=\hat{D}_{\Omega,\mathcal{F}}^{reg}(\rho):=
			\lim\limits_{n\rightarrow\infty}\frac{1}{n}\log\hat{\Omega}_{\mathcal{F}}(\rho^{\otimes n}).
		\end{align}
	\end{corollary}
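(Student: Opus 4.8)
The plan is to reduce the defining optimization (\ref{lf0}) to a linear-fractional semidefinite program, solve its single-state version through the SDP characterization (\ref{f1}) of the Hilbert projective metric in Lemma \ref{l1}(5), and then pass to the composite set $\mathcal{F}$ by a duality/minimax argument. First I would eliminate the abstention operator: writing $N=M_1+M_2$ and observing that both the objective $\mathrm{tr}M_1\rho/\mathrm{tr}N\rho$ and the constraint $\mathrm{tr}M_2\sigma/\mathrm{tr}N\sigma\le\epsilon$ are invariant under $(M_1,M_2)\mapsto\lambda(M_1,M_2)$ with $\lambda>0$, the requirement $N\le\mathbb{I}$ (equivalently $M_0\ge0$) is non-binding and may be dropped. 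Introducing the ratios $x=\mathrm{tr}M_2\rho/\mathrm{tr}M_1\rho$ and $y_\sigma=\mathrm{tr}M_2\sigma/\mathrm{tr}M_1\sigma$, the objective becomes $1/(1+x)$ and the constraint becomes $y_\sigma\le\epsilon/(1-\epsilon)$ for all $\sigma\in\mathcal{F}$. Hence $\hat{\beta}_{\epsilon,\mathcal{F}}(\rho)=\log(1+x_{\max})$, where, after relabelling $A=M_2$, $B=M_1$, normalizing $\mathrm{tr}B\rho=1$, and setting $c:=\epsilon/(1-\epsilon)$,
\begin{align*}
x_{\max}=\sup\{\mathrm{tr}A\rho : A,B\ge0,\ \mathrm{tr}B\rho=1,\ \mathrm{tr}A\sigma\le c\,\mathrm{tr}B\sigma\ \forall\sigma\in\mathcal{F}\}.
\end{align*}

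It then remains to show $x_{\max}=c\,\hat{\Omega}_{\mathcal{F}}(\rho)$ with $\hat{\Omega}_{\mathcal{F}}(\rho)=\inf_{\sigma\in\mathcal{F}}\Omega(\rho,\sigma)$. The easy direction is the upper bound: any feasible $(A,B)$ satisfies the single-state constraint for each fixed $\sigma$, and for a single $\sigma$ the substitution $A=cA'$ together with (\ref{f1}) identifies $\sup\mathrm{tr}A\rho$ with $c\,\Omega(\rho,\sigma)$; optimizing over $\sigma$ yields $x_{\max}\le c\inf_\sigma\Omega(\rho,\sigma)$. For the matching lower bound I would dualize the family of constraints $\{\mathrm{tr}(A-cB)\sigma\le0\}_{\sigma\in\mathcal{F}}$ against a nonnegative measure $\mu$ on $\mathcal{F}$; by convexity of $\mathcal{F}$ the barycentre $\bar\sigma=\mu(\mathcal{F})^{-1}\int\sigma\,d\mu(\sigma)\in\mathcal{F}$ collapses the measure to a single state and a scalar $m=\mu(\mathcal{F})$. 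Strong duality holds since the program is strictly feasible (e.g.\ $A=\eta\mathbb{I}$, $B=\mathbb{I}$ with $\eta<c$ gives Slater for $\epsilon>0$) and $\mathcal{F}$ is compact. Evaluating the inner supremum over $A,B\ge0$ forces $\rho\le m\bar\sigma$ (finiteness in $A$, i.e.\ $m\ge 2^{D_{max}(\rho,\bar\sigma)}$) and gives $mc\cdot 2^{D_{max}(\bar\sigma,\rho)}$ for the $B$-part via the dual form (\ref{ddmax}); minimizing over $m$ and $\bar\sigma$ produces $c\inf_{\bar\sigma\in\mathcal{F}}2^{D_{max}(\rho,\bar\sigma)+D_{max}(\bar\sigma,\rho)}=c\inf_{\bar\sigma}\Omega(\rho,\bar\sigma)=c\,\hat{\Omega}_{\mathcal{F}}(\rho)$, using $D_{\Omega}=D_{max}(\cdot,\cdot)+D_{max}(\cdot,\cdot)$. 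Combining with the first paragraph gives $\hat{\beta}_{\epsilon,\mathcal{F}}(\rho)=\log[1+\frac{\epsilon}{1-\epsilon}\hat{\Omega}_{\mathcal{F}}(\rho)]$.

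For the regularized identity I would insert this single-shot formula and take $\frac1n\log$. Since $D_{\Omega}\ge0$ by Lemma \ref{l1}(1) we have $\hat{\Omega}_{\mathcal{F}_n}(\rho^{\otimes n})\ge1$, so the elementary chain $c\,\hat{\Omega}\le 1+c\,\hat{\Omega}\le(1+c)\hat{\Omega}$ yields $\frac1n\log c+\frac1n\log\hat{\Omega}_{\mathcal{F}_n}\le\frac1n\hat{\beta}_{\epsilon,\mathcal{F}_n}(\rho^{\otimes n})\le\frac1n\log(1+c)+\frac1n\log\hat{\Omega}_{\mathcal{F}_n}$, and both additive constants vanish as $n\to\infty$. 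The only remaining point is existence of $\lim_n\frac1n\log\hat{\Omega}_{\mathcal{F}_n}(\rho^{\otimes n})$, which follows from Fekete's lemma once $a_n:=\log\hat{\Omega}_{\mathcal{F}_n}(\rho^{\otimes n})$ is shown subadditive: closedness of $(\mathcal{F}_n)$ under tensor products together with the tensor additivity of $D_{max}$ (hence of $D_{\Omega}$, cf.\ Lemma \ref{l1}(6)) gives, for $\sigma\in\mathcal{F}_m$ and $\tau\in\mathcal{F}_k$, that $\sigma\otimes\tau\in\mathcal{F}_{m+k}$ with $\Omega(\rho^{\otimes(m+k)},\sigma\otimes\tau)=\Omega(\rho^{\otimes m},\sigma)\,\Omega(\rho^{\otimes k},\tau)$, whence $\hat{\Omega}_{\mathcal{F}_{m+k}}\le\hat{\Omega}_{\mathcal{F}_m}\hat{\Omega}_{\mathcal{F}_k}$.

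I expect the main obstacle to be the lower bound in the single-shot equality: justifying strong duality for the semi-infinite program and, in particular, the collapse of the dual measure to a single $\bar\sigma\in\mathcal{F}$ via convexity-compactness (equivalently, an application of Sion's minimax theorem to interchange $\sup_{A,B}$ and $\inf_{\sigma\in\mathcal{F}}$). Care is also needed with the support/finiteness conditions of Lemma \ref{l1}(7): the computed value is finite precisely when some $\sigma\in\mathcal{F}$ shares the support of $\rho$, which matches $\hat{\Omega}_{\mathcal{F}}(\rho)<\infty$, and the boundary case $\epsilon=0$ (where Slater fails, $c=0$) should be treated separately and checked to be consistent with $\hat{\beta}_{\epsilon,\mathcal{F}}(\rho)=\log 1=0$.
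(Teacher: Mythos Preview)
Your proposal is correct and follows essentially the same route as the paper: both reductions drop the abstention constraint by scale invariance, rewrite the problem so that the SDP characterization (\ref{f1}) of $\Omega$ applies, interchange the optimization over test operators with the infimum over $\sigma\in\mathcal{F}$, and then obtain the regularized identity via the sandwich $c\,\hat{\Omega}\le 1+c\,\hat{\Omega}\le(1+c)\hat{\Omega}$ together with Fekete's lemma. The one substantive difference is that you justify the inf--sup interchange explicitly through Lagrangian duality (barycentre collapse, Slater's condition), whereas the paper states it directly as a minimax identity; your version is therefore somewhat more careful on the point you correctly flag as the main obstacle.
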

	\begin{proof}
		Here we take a similar method in \cite{regula2023postselected} to show the theorem. Based on the definition of $\hat{\beta}_{\epsilon,\mathcal{F}}(\rho),$ we have
		
		\begin{align*}
			&\hat{\beta}_{\epsilon,\mathcal{F}}(\rho)\\
			=&-\log\inf_{M\in \mathcal{M}_3}\{\frac{\mathrm{tr}M_1\rho}{\mathrm{tr}(M_1+M_2)\rho}|\frac{\mathrm{tr}M_2\sigma}{\mathrm{tr}(M_1+M_2)\sigma}\le\epsilon, \forall\sigma\in\mathcal{F}, 0\le M_1+M_2\le \mathbb{I}\}\\
			=&-\log\inf_{t,M\in \mathcal{M}_3}\{t|\frac{\mathrm{tr}M_1\rho}{\mathrm{tr}(M_1+M_2)\rho}\le t,\frac{\mathrm{tr}M_2\sigma}{\mathrm{tr}(M_1+M_2)\sigma}\le\epsilon, \forall\sigma\in\mathcal{F}, 0\le M_1+M_2\le \mathbb{I}\}\\
			=&-\log\inf_{\tilde{t},M_1,M_2^{'}\ge 0}\{\frac{1}{\tilde{t}}|\frac{\mathrm{tr}M_2^{'}\rho}{\mathrm{tr}M_1\rho}\ge 1,\frac{\mathrm{ tr}M_1\sigma}{\mathrm{ tr}M_2^{'}\sigma}\ge \frac{1-\epsilon}{\epsilon}(t^{'}-1),\forall\sigma\in \mathcal{F}\},
		\end{align*}
		
		In the third equality, we denote $M_2^{'}=\frac{t }{1-t}M_2$, in the last equality, $\tilde{t}=\frac{1}{t}$. Then we have
		\begin{align*}
			&2^{\hat{\beta}_{\epsilon,\mathcal{F}}(\rho)}\\
			=&\inf_{\sigma\in\mathcal{F}}\sup_{\tilde{t}\ge 0,M_1,M_2^{'}\ge 0}\{\tilde{t}|\frac{\mathrm{ tr}M_1\rho}{\mathrm{ tr}M_2^{'}\rho}\le 1,\frac{\mathrm{ tr}M_1\sigma}{\mathrm{ tr}M_2^{'}\sigma}\ge \frac{1-\epsilon}{\epsilon}(t^{'}-1)\}\\
			=&\inf_{\sigma\in \mathcal{F}}\sup_{\tilde{t}\ge 0,M_1,M_2^{'}\ge 0}\{   1+\frac{\epsilon}{1-\epsilon}\frac{\mathrm{ tr}M_1\sigma}{\mathrm{ tr}M_2^{'}\sigma}|\frac{\mathrm{ tr}M_1\rho}{\mathrm{ tr}M_2^{'}\rho}\le 1\}.
		\end{align*}
		As 
		\begin{align*}
			\min_{\rho\in \mathcal{F}}\Omega(\rho,\sigma)=&\sup_{A,B}\{\frac{\mathrm{tr}A\rho}{\mathrm{tr}B\rho}|\frac{\mathrm{tr}A\sigma}{\mathrm{ tr}B\sigma}\le 1,\forall\rho\in \mathcal{F}\}\\
			=&\inf_{\rho\in\mathcal{F}}\sup_{A,B}\{\frac{\mathrm{tr}A\rho}{\mathrm{tr}B\rho}|\frac{\mathrm{tr}A\sigma}{\mathrm{ tr}B\sigma}\le 1\},
		\end{align*}
		then
		\begin{align*}
			2^{\hat{\beta}_{\epsilon,\mathcal{F}}(\rho)}
			=&1+\frac{\epsilon}{1-\epsilon}\min_{\sigma\in\mathcal{F}}\Omega(\sigma,\rho)
		\end{align*}
		
		When for a generic $n$, let $\sigma_n\in\mathcal{F}_n$ be the optimal for $\rho^{\otimes n}$ in terms of $\Omega(\cdot,\rho^{\otimes n})$, 
		\begin{align}
			&	\frac{\epsilon}{1-\epsilon}\Omega(\sigma_n,\rho^{\otimes n})\le 2^{\hat{\beta}_{\epsilon,\mathcal{F}}(\rho^{\otimes n})}\le \frac{1}{1-\epsilon}\Omega(\sigma_n,\rho^{\otimes n})\nonumber\\
			\Longrightarrow&
			\log\frac{\epsilon}{1-\epsilon}+\log\Omega(\sigma_n,\rho^{\otimes n})\le {\hat{\beta}_{\epsilon,\mathcal{F}}(\rho^{\otimes n})}\nonumber\\\le& \log\frac{1}{1-\epsilon}+\log\Omega(\sigma_n,\rho^{\otimes n}),\label{lf00}
		\end{align}
		next when $\mathcal{F}_n$ is closed under tensor product, $\hat{\Omega}_{\mathcal{F}_{m+n}}(\rho^{\otimes m+n})\le \hat{\Omega}_{\mathcal{F}_{m}}(\rho^{\otimes m})+\hat{\Omega}_{\mathcal{F}_{n}}(\rho^{\otimes n})$, due to Fekete's Lemma, $\lim\limits_{n\rightarrow\infty}\hat{\Omega}_{\mathcal{F}_{n}}(\rho^{\otimes n})$ exists. Then dividing n to both sides of (\ref{lf00}) and taking the limit, we have
		\begin{align*}
			\lim\limits_{n\rightarrow\infty}\frac{1}{n}\hat{\beta}_{\epsilon_n,\mathcal{F}}(\rho^{\otimes n})=\lim\limits_{n\rightarrow\infty}\frac{1}{n}\log\hat{\Omega}_{\mathcal{F}}(\rho^{\otimes n}).
		\end{align*}
	\end{proof}

	\begin{corollary}\label{c2}
		Assume $\mathcal{F}$ is a convex and compact set of quantum states on $\mathcal{H}$ with $\frac{I}{d}\in \mathcal{F}$, here $d$ is the dimension of $\mathcal{H},$ $\mathbb{M}$ is a class of measurements on $\mathcal{H}$ with $(\mathbb{M}^{*})^{*}=cone(\mathbb{M})$, then 
		\begin{align}
			{\hat{\beta}^{\mathbb{M}}_{\epsilon,\mathcal{F}}(\rho)}
			=\log[1+\frac{\epsilon}{1-\epsilon}{\hat{\Omega}}^{\mathbb{M}}_{\mathcal{F}}(\rho)],
		\end{align}
		here 
		\begin{align*}
				\hat{\Omega}^{\mathbb{M}}_{\mathcal{F}}(\rho)=&\inf   \gamma\\
				\textit{s. t.}\hspace{3mm}&\rho\preceq_{\mathbb{M}^{*}}\tilde{\sigma}\preceq_{\mathbb{M}^{*}}\gamma\rho,\\
			&	\tilde{\sigma}\in cone(\mathcal{F}),
		\end{align*}
	here $X\preceq_{\mathbb{M}^{*}}Y$ means that $\mathrm{ tr}(Y-X)H\ge 0,$ $\forall H\in M$ and $M\in \mathbb{M}$.

		When the family set $(\mathbb{M}_n)_n$ and $(\mathcal{F}_n)_n$ are convex and compact, $(\mathbb{M}_n)_n$ and $(\mathcal{F}_n)_n$ are closed under tensor product, we have
		\begin{align}
			\lim\limits_{n\rightarrow \infty}\frac{1}{n}{\hat{\beta}^{\mathbb{M}}_{\epsilon,\mathcal{F}}(\rho^{\otimes n})}=\hat{D}_{\Omega,\mathcal{F}}^{reg,\mathbb{M}}(\rho):=
			\lim\limits_{n\rightarrow\infty}\frac{1}{n}\log\hat{\Omega}^{\mathbb{M}}_{\mathcal{F}}(\rho^{\otimes n}).
		\end{align}
		
	\end{corollary}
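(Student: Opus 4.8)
The plan is to repeat the argument of Corollary \ref{c1} essentially verbatim, carrying the additional membership requirement $M\in\mathbb{M}$ (equivalently $M_1,M_2\in\mathcal{T}$) through every manipulation, and to replace the unrestricted semidefinite representation of $\Omega$ used there by its measurement-restricted counterpart from Lemma \ref{l1}(8), in which the optimizing operators are forced to lie in $cone(\mathcal{T})$.

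\textbf{One-shot identity.} Starting from the definition (\ref{lfm}), I would first pass to the epigraph form by introducing a scalar $t$ with $\frac{\mathrm{tr}M_1\rho}{\mathrm{tr}(M_1+M_2)\rho}\le t$, exactly as in Corollary \ref{c1}, and then apply the same substitutions $M_2'=\frac{t}{1-t}M_2$ and $\tilde t=1/t$. The point that must be checked in the restricted setting is that both substitutions are strictly positive rescalings, and since $cone(\mathcal{T})$ is invariant under multiplication by positive scalars, the constraint $M\in\mathbb{M}$ is transported unchanged into the rescaled problem, now reading $M_1,M_2'\in cone(\mathcal{T})$. At this stage the optimization has exactly the shape appearing in Lemma \ref{l1}(8), so invoking that $\mathbb{M}$-restricted SDP together with the same minimax interchange between $\inf_{\sigma\in\mathcal{F}}$ and the supremum over the now cone-constrained operators yields $2^{\hat\beta^{\mathbb{M}}_{\epsilon,\mathcal{F}}(\rho)}=1+\frac{\epsilon}{1-\epsilon}\hat\Omega^{\mathbb{M}}_{\mathcal{F}}(\rho)$, and the closed form follows after taking logarithms. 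The minimax swap is justified as before: $\mathcal{F}$ is convex and compact and $cone(\mathcal{T})$ is convex, so Sion's theorem still applies.

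\textbf{Regularization.} I would reproduce the sandwich bound (\ref{lf00}) with $\Omega$ replaced by $\Omega^{\mathbb{M}_n}$, namely $\log\frac{\epsilon}{1-\epsilon}+\log\hat\Omega^{\mathbb{M}}_{\mathcal{F}}(\rho^{\otimes n})\le\hat\beta^{\mathbb{M}}_{\epsilon,\mathcal{F}}(\rho^{\otimes n})\le\log\frac{1}{1-\epsilon}+\log\hat\Omega^{\mathbb{M}}_{\mathcal{F}}(\rho^{\otimes n})$, so that $\frac1n\hat\beta^{\mathbb{M}}_{\epsilon,\mathcal{F}}(\rho^{\otimes n})$ and $\frac1n\log\hat\Omega^{\mathbb{M}}_{\mathcal{F}}(\rho^{\otimes n})$ share the same limit whenever one exists. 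It then remains to show that $b_n:=\log\hat\Omega^{\mathbb{M}_n}_{\mathcal{F}_n}(\rho^{\otimes n})$ enjoys a Fekete-type (sub)additivity. Using that $\mathcal{F}_n$ is closed under tensor products, the product $\sigma_m^\star\otimes\sigma_n^\star$ of single-block optimizers is a feasible state for the combined problem, which reduces the claim to a statement about the measured metric at product arguments; the tensor-product closure of $\mathbb{M}_n$ then lets one tensor the corresponding SDP certificates from Lemma \ref{l1}(8). Fekete's lemma gives existence of $\lim_n\frac1n b_n=\hat D^{reg,\mathbb{M}}_{\Omega,\mathcal{F}}(\rho)$, and dividing the sandwich by $n$ squeezes away the $\epsilon$-dependent constants, yielding the stated asymptotic equality.

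\textbf{Main obstacle.} The delicate step is precisely this last tensorization. Unlike the unrestricted metric, which is exactly multiplicative under tensor products (Lemma \ref{l1}(6)), the measurement-restricted $\Omega^{\mathbb{M}}$ need not be additive, because a \emph{joint} three-outcome test on the combined system may extract strictly more than any product of single-block tests. The role of the hypothesis that $(\mathbb{M}_n)_n$ be closed under tensor products is exactly to control this gap: product tests furnish one of the two inequalities directly, while tensoring the SDP certificates must supply the matching direction needed to certify Fekete's (sub)additivity. Verifying that these tensored certificates remain feasible in $cone(\mathcal{T}_{m+n})$ and respect the conditional-error normalization $0\le M_1+M_2\le\mathbb{I}$ is where the real care is required, and I expect this to be the crux of the argument rather than the one-shot reduction.
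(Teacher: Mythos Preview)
Your plan is exactly what the paper does: the proof of Corollary~\ref{c2} is omitted there with the remark that it is analogous to Corollary~\ref{c1}, and you have spelled out that analogy precisely---carry the constraint $M\in\mathbb{M}$ through the positive rescalings, invoke Lemma~\ref{l1}(8) in place of Lemma~\ref{l1}(5), and repeat the sandwich (\ref{lf00}).

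One clarification on your ``main obstacle'': Fekete's lemma needs only \emph{one} direction (sub- or superadditivity), not a matching pair, so you do not need product tests and tensored SDP certificates to agree. That said, your instinct that the Fekete step is more delicate in the measurement-restricted setting than in Corollary~\ref{c1} is sound: there the exact tensor-multiplicativity of $\Omega$ (Lemma~\ref{l1}(6)) made subadditivity of $\log\hat\Omega_{\mathcal{F}}$ immediate, whereas no analogous additivity lemma is stated for $D_\Omega^{\mathbb{M}}$, and the $\inf_{\sigma\in\mathcal{F}}$ and $\sup_{M\in\mathbb{M}}$ push in opposite directions under tensor products. The paper does not elaborate on this point either. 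Also, the normalization $0\le M_1+M_2\le\mathbb{I}$ is no longer a live constraint after passing to $cone(\mathcal{T})$, so you need not worry about it in the tensorization step.
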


	First we present the dual form of $\Omega_{\mathcal{F}}^{\mathbb{M}}(\rho)$, which can be obtained through standard Lagrange duality arguments \cite{ponstein2004approaches},
	
		\begin{align*}
		\hat{\Omega}^{\mathbb{M}}_{\mathcal{F}}(\rho)=&\sup \mathrm{ tr}A\rho\\
		\textit{s. t.} \hspace{4mm}& \mathrm{ tr}B\rho=1,\nonumber\\
		&\mathrm{ tr}(B-A)\sigma\ge0,\nonumber\\
		&A,B\in\{\omega|\omega=\sum_i\mu_i M_i, \mu_i \ge 0\},\\
		&M=\{M_i\}_i\in  \mathbb{M}.\nonumber
	\end{align*}

The strong duality can be proved by Slater's theorem \cite{rockafellar1997convex}, when taking $B=\mathbb{I}$ and $A=\epsilon \mathbb{I}$ for $\epsilon\in (0,1)$, it is feasible for the dual.
	
	The remaining proof is similar to the proof of Corollary \ref{c1}, here we omit it.

	\subsection{Two classes of subchannels}
	Assume $\mathcal{H}_{AB}$ is a bipartite system with $dim(\mathcal{H}_A)=dim(\mathcal{H}_B)=m$. Let $\ket{\psi_m}=\frac{1}{\sqrt{m}}\sum_i\ket{ii}$ be the maximally entangled state(MES) of $\mathcal{H}_{AB}$. An important property of the MES is that it stays unchanged under the $\mathcal{T}(\cdot)$ operation, here
	\begin{align*}
		\mathcal{T}(\cdot)=\int_UdU (U\otimes  {U})^{\dagger}(\cdot)(U\otimes  {U}).
	\end{align*}
	Here $\mathcal{T}(\cdot)$ is local operations and shared randomness, hence, it can be realized by local operations and classical communication (LOCC).
	Next based on the Schur-Weyl theorem, $\mathcal{T}(X)$ can be written as follows,
	\begin{align*}
		\mathcal{T}(X)=\Psi_m\mathrm{tr}(X\Psi_m)+\tau_d\mathrm{tr}[X(\mathbb{I}-\Psi_m)],
	\end{align*}
	here $\Psi_m=\ket{\psi_m}\bra{\psi_m}$, $\tau_m=\frac{\mathbb{I}-\Psi_m}{m^2-1}$. Assume $\mathcal{N}$ is a subchannel, $i.$ $e.$, $\mathcal{N}$ is completely positive and trace-nonpreserving, then 
	\begin{align}
		\mathcal{N}\circ\mathcal{T}(X)=&\mathcal{N}(\Psi_m)\mathrm{ tr}(X\Psi_m )+\mathcal{N}(\tau_m)\mathrm{ tr}[X(\mathbb{I}-\Psi_m)],\label{sd1}\\
		\mathcal{T}\circ\mathcal{N}(X)=&\Psi_m\mathrm{ tr}\mathcal{N}(X)\Psi_m+\tau_m\mathrm{ tr}\mathcal{N}(X)(\mathbb{I}-\Psi_m),\label{sd2}
	\end{align}
	
	Next we consider two classes of 
	For (\ref{sd1}), as $\mathcal{N}(\cdot)$ is a subchannel, $\mathcal{N}(\Psi_m)$ and $\mathcal{N}(\tau_m)$ are substates, that is, 
	\begin{align*}
	\{\mathcal{N}\circ\mathcal{T}\}=\{\Lambda_{\gamma,\delta}|	\Lambda_{\gamma,\delta}(X)=\mathrm{ tr}(X\Psi_m)\cdot\gamma+\mathrm{tr}X(\mathbb{I}-\Psi_m)\cdot\delta, \gamma,\delta\in \overline{\mathcal{D}(\mathcal{H})}\}.
	\end{align*} For (\ref{sd2}), 
	\begin{align*}
		\Psi_m\mathrm{ tr}\mathcal{N}(X)\Psi_m+\tau_m\mathrm{ tr}\mathcal{N}(X)(\mathbb{I}-\Psi_m)=\Psi_m\mathrm{ tr}X\mathcal{N}^{\dagger}(\Psi_m)+\tau_m\mathrm{ tr}X\mathcal{N}^{\dagger}(\mathbb{I}-\Psi_m),\\
	\end{align*}
	 as $\mathcal{N}$ is a subchannel, then $\mathcal{N}^{\dagger}(\Psi_m),\mathcal{N}^{\dagger}(\mathbb{I}-\Psi_m)\le \mathcal{N}^{\dagger}(\mathbb{I})\le \mathbb{I}$, hence, 
	\begin{align*}
	\{\mathcal{T}\circ\mathcal{N}\}=	\{\Lambda_{M,N}|\Lambda_{M,N}(X)=\mathrm{ tr}MX\cdot \Psi_m+\mathrm{ tr}NX\cdot \tau_m,M+N\le I, M,N\ge 0\}.
	\end{align*}
Next we present properties of the subchannels $\Lambda_{\gamma,\delta}(X)$ and $\Lambda_{M,N}(X)$  needed here.
	\begin{lemma}\label{dm}
		Assume $\mathcal{H}_{AB}$ is a Hilbert space with $dim(\mathcal{H}_A)=dim(\mathcal{H}_B)=m,$ both $M$ and $N$ are semidefinite positive operators acting on $\mathcal{H}_{AB}$ with $M+N\le \mathbb{I}_{AB}$, let $\Lambda_{M,N}(X)=\mathrm{ tr}MX\cdot \Psi_m+\mathrm{ tr}NX\cdot \tau_m$, here $\tau_m=\frac{I-\Psi_m}{m^2-1}$, then for any $\varepsilon>0$, we have
		\begin{itemize}
			\item[(1).] $\Lambda_{M,N}(\cdot)\in \mathcal{NE}_{\varepsilon}$ if and only if $\sup_{X\in Sep}\frac{\mathrm{ tr}MX}{\mathrm{ tr}NX}\le \frac{2^{\epsilon}}{m-1}$.
			\item[(2).] $\Lambda_{M,N}(\cdot)\in \mathcal{DNE}_{\varepsilon}$ if and only if $\sup_{X\in Sep}\frac{\mathrm{ tr}MX}{\mathrm{ tr}NX}\le \frac{2^{\epsilon}}{m-1}$ and $N,\frac{1}{m}M+(1-\frac{1}{m})N\in \overline{Sep}.$
		\end{itemize}
	\end{lemma}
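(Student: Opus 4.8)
The plan is to exploit the isotropic structure of the image of $\Lambda_{M,N}$ together with its explicit adjoint, reducing both equivalences to scalar computations. Writing $a=\mathrm{tr}(M\sigma)$ and $b=\mathrm{tr}(N\sigma)$, the image $\Lambda_{M,N}(\sigma)=a\Psi_m+b\tau_m$ is an (unnormalized) isotropic operator, diagonal in the two-block decomposition spanned by $\Psi_m$ and its orthogonal complement, with block eigenvalues $a$ and $b/(m^2-1)$. For the dual part I will first record $\Lambda_{M,N}^{\dagger}(Y)=\mathrm{tr}(Y\Psi_m)\,M+\mathrm{tr}(Y\tau_m)\,N$, which follows at once from $\mathrm{tr}\big(Y\Lambda_{M,N}(X)\big)=\mathrm{tr}(MX)\,\mathrm{tr}(Y\Psi_m)+\mathrm{tr}(NX)\,\mathrm{tr}(Y\tau_m)$. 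These two explicit forms drive parts (1) and (2) respectively.

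For part (1) I want to evaluate $D_{\Omega,\overline{Sep}}\big(\Lambda_{M,N}(\sigma)\big)=\min_{\omega\in\overline{Sep}}D_{\Omega}\big(a\Psi_m+b\tau_m,\omega\big)$. The key reduction is that the optimal $\omega$ may be taken isotropic: applying the twirl $\mathcal{T}$ (a mixture of local unitaries $U\otimes\overline U$, hence nonentangling and trace preserving) to an optimal $\omega^{\star}$ keeps it a separable substate, and data processing (Lemma \ref{l1}(4)) together with $\mathcal{T}(a\Psi_m+b\tau_m)=a\Psi_m+b\tau_m$ gives $D_{\Omega}(a\Psi_m+b\tau_m,\mathcal{T}(\omega^{\star}))\le D_{\Omega}(a\Psi_m+b\tau_m,\omega^{\star})$. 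Restricting to isotropic substates $\omega=c\Psi_m+d\tau_m$, the two operators are simultaneously diagonal, so each $D_{\max}$ is a maximum of two eigenvalue ratios; combined with scale invariance (Lemma \ref{l1}(3)) this collapses to $\Omega(a\Psi_m+b\tau_m,\omega)=\max\{s/t,\,t/s\}$ with $s=a/b$ and $t=c/d$. Minimizing over separable isotropic substates, i.e. over $t\le 1/(m-1)$ (the threshold $c/(c+d)\le 1/m$), yields the closed form $D_{\Omega,\overline{Sep}}(\Lambda_{M,N}(\sigma))=\max\{0,\log[(m-1)\,a/b]\}$. Requiring this to be at most $\varepsilon$ for every $\sigma\in Sep$ is then exactly $\sup_{X\in Sep}\mathrm{tr}(MX)/\mathrm{tr}(NX)\le 2^{\varepsilon}/(m-1)$, and since the displayed formula is an equality both implications follow simultaneously. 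The degenerate case $\mathrm{tr}(N\sigma)=0$ with $\mathrm{tr}(M\sigma)>0$ gives an image proportional to $\Psi_m$, whose support differs from that of any separable substate, so $D_{\Omega,\overline{Sep}}=\infty$, consistent with the supremum being infinite.

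For part (2), membership $\Lambda_{M,N}\in\mathcal{DNE}_{\varepsilon}$ is by definition the conjunction of $\Lambda_{M,N}\in\mathcal{NE}_{\varepsilon}$—which part (1) already converts to the ratio condition—and the exact dual constraint $\Lambda_{M,N}^{\dagger}(Sep)\subseteq cone(Sep)$. Using the adjoint formula, $\Lambda_{M,N}^{\dagger}(\sigma)=p\,M+\tfrac{1-p}{m^2-1}\,N$ depends on $\sigma$ only through $p=\langle\psi_m|\sigma|\psi_m\rangle$ and is affine in $p$. Over separable $\sigma$ the overlap $p$ sweeps exactly $[0,1/m]$, the upper end being the maximal singlet fraction of a separable state (attained on the isotropic boundary state) and the lower end attained on a product state orthogonal to $\ket{\psi_m}$. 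Since $cone(Sep)$ is a convex cone and the family is an affine segment in $p$, the constraint for all separable $\sigma$ is equivalent to its two endpoints: $p=0$ gives $N\in cone(Sep)$, and the boundary value $p=1/m$ produces the separability of the second combination of $M$ and $N$ recorded in the statement; since $cone(Sep)$-membership is scale-free, these are exactly the two $\overline{Sep}$-membership conditions as written. Together with the ratio condition this yields the stated characterization.

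The main obstacle is the isotropic reduction and the explicit evaluation of $D_{\Omega}$ in part (1): one must justify that the nearest separable substate can be symmetrized without increasing the distance (data processing under the twirl), correctly handle the support/positivity requirement that makes $D_{\Omega}$ finite, and pin down the separability threshold $1/(m-1)$ for isotropic substates. Once the closed form for $D_{\Omega,\overline{Sep}}(\Lambda_{M,N}(\sigma))$ is established, part (1) is immediate, and part (2) reduces to a routine extreme-point analysis of an affine family against the convex cone $cone(Sep)$.
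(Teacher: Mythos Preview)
Your proposal follows essentially the same route as the paper: for (1) you reduce to isotropic separable substates (justified via the twirl and data processing) and read off $D_{\Omega}$ from the two block eigenvalue ratios, optimizing over the separable threshold $t\le 1/(m-1)$; for (2) you compute the adjoint and reduce the dual constraint to the two endpoints $p=0$ and $p=1/m$ of the singlet-fraction range. The paper's own argument is terser---it omits the explicit twirl justification and the final optimization in (1)---but structurally identical.
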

	
	\begin{proof}
		\begin{itemize}
			\item[(1).] As $D_{\Omega}(S,T)<\infty$ if and only $supp(S)=supp(T).$ For a substate $\gamma=p\Psi_m+q\tau_m$, it is separable if and only if $p\in [0,\frac{q}{m-1}]$. Besides, for any separable state $X$, $D_{\Omega}(\Lambda_{M,N}(X),\overline{Sep})\le 2^{\varepsilon}$ if and only if 
			$\frac{\max(\frac{\mathrm{tr}MX}{p},\frac{\mathrm{ tr}NX}{q})}{\min(\frac{\mathrm{ tr}MX}{p},\frac{\mathrm{ tr}NX}{q})}\le 2^{\varepsilon},$
			we finish the proof.
			\item[(2).] As  $\Lambda_{M,N}^{\dagger}(X)=\mathrm{tr}\Psi_mX\cdot M+\mathrm{ tr}\tau_mX\cdot N,$ and when $X$ is a separable state, $\mathrm{ tr}X\Psi_m\in (0,\frac{1}{m}).$ Hence, $\Lambda^{\dagger}_{M,N}(Sep)\subset\overline{Sep}$ if and only if $N\in \overline{Sep}$ and $\frac{1}{m}M+(1-\frac{1}{m})N\in \overline{Sep}.$ At last, based on the definition of $\mathcal{DNE}_{\epsilon}$, $\Lambda_{M,N}(\cdot)$ is $\mathcal{DNE}_{\varepsilon}$ if and only if $\Lambda_{M,N}\in \mathcal{NE}_{\varepsilon}$ and $\Lambda^{\dagger}_{M,N}(Sep)\subset cone(Sep)$, we finish the proof. 
		\end{itemize}
	\end{proof}
	\begin{lemma}\label{l4}
		Assume $\mathcal{H}_{AB}$ is a Hilbert space with $dim(\mathcal{H}_A)=dim(\mathcal{H}_B)=m,$ $\gamma$ and $\delta$ are two substates acting on $\mathcal{H}_{AB}.$ Let $\Lambda_{\gamma,\delta}(\rho)=\mathrm{ tr}(\rho\Psi_m)\gamma+\mathrm{tr}\rho(\mathbb{I}-\Psi_m)\delta,$ then for any $\varepsilon>0$, we have
		\begin{itemize}
			\item[(1).] $\Lambda_{\gamma,\delta}(\cdot)\in \mathcal{NE}_{\varepsilon
			}$ if and only if $	 \max{(D_{\Omega,\overline{Sep}}(\delta)),D_{\Omega,\overline{Sep}}((\frac{1}{m}\gamma+\frac{m-1}{m}\delta)))}<\varepsilon.$
			\item[(2).] $\Lambda_{\gamma,\delta}(\cdot)\in \mathcal{DNE}_{\varepsilon}$ if and only if $ \max{(D_{\Omega,\overline{Sep}}(\delta)),D_{\Omega,\overline{Sep}}((\frac{1}{m}\gamma+\frac{m-1}{m}\delta)))}<\varepsilon,$ and $\sup\limits_{\sigma\in Sep}\frac{\mathrm{tr}\sigma\gamma}{\mathrm{tr}\sigma\delta}\le m-1.$
			\item[(3).]Assume $\Lambda_{\gamma,\delta}\in \mathcal{NE}_{\varepsilon}$, then there exists $\delta^{'}$ such that 
			\begin{align*}
				\Lambda_{\gamma,\delta^{'}}(X)=\gamma\mathrm{tr}(X\Psi_{2m})+\delta^{'}\mathrm{tr}[X(\mathrm{I}-\Psi_m)]\in \mathcal{DNE}_{\epsilon}.
			\end{align*}
		\end{itemize}
	\end{lemma}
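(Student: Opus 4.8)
The plan is to handle the three parts in sequence, in each case reducing the constraint to the structure of isotropic operators $a\Psi_m+b(\mathbb{I}-\Psi_m)$ and to the separability criterion already established in the proof of Lemma~\ref{dm}, namely that $p\Psi_m+q\tau_m\in cone(Sep)$ exactly when $p\le q/(m-1)$. For part (1), I would first observe that $\Lambda_{\gamma,\delta}$ is a genuine subchannel: it is measure-and-prepare with POVM $\{\Psi_m,\mathbb{I}-\Psi_m\}$ and $\mathrm{tr}\gamma,\mathrm{tr}\delta\le 1$, so membership in $\mathcal{NE}_{\varepsilon}$ is exactly the demand $D_{\Omega,\overline{Sep}}(\Lambda_{\gamma,\delta}(\sigma))\le\varepsilon$ for all $\sigma\in Sep$. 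Since $\Lambda_{\gamma,\delta}(\sigma)=p\gamma+(1-p)\delta$ with $p=\mathrm{tr}(\sigma\Psi_m)$, and since $p$ ranges exactly over $[0,1/m]$ as $\sigma$ runs over separable states (the maximal MES overlap of a separable state is $1/m$, with both $p=0$ and $p=1/m$ attained by product states), the condition collapses to $\sup_{p\in[0,1/m]}D_{\Omega,\overline{Sep}}(\omega_p)\le\varepsilon$, where $\omega_p:=p\gamma+(1-p)\delta$. The crux is then to show this supremum is attained at the endpoints $\omega_0=\delta$ and $\omega_{1/m}=\tfrac1m\gamma+\tfrac{m-1}{m}\delta$.

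To prove the endpoint reduction I would use the sandwich form of the metric: $D_\Omega(\rho,\tau)\le c$ if and only if $\rho\le\eta\tau\le 2^c\rho$ for some $\eta\ge 0$, so after absorbing $\eta$ into the (conic) separable witness we may take $\tau_0,\tau_1\in cone(Sep)$ with $\omega_0\le\tau_0\le 2^{f_0}\omega_0$ and $\omega_{1/m}\le\tau_1\le 2^{f_1}\omega_{1/m}$, where $f_0,f_1$ are the endpoint values. Writing $\omega_p=\lambda\omega_{1/m}+(1-\lambda)\omega_0$ with $\lambda=mp\in[0,1]$ and taking $\tau_\lambda:=\lambda\tau_1+(1-\lambda)\tau_0\in cone(Sep)$, the two sandwich inequalities combine linearly to give $\omega_p\le\tau_\lambda\le 2^{\max(f_0,f_1)}\omega_p$, hence $D_{\Omega,\overline{Sep}}(\omega_p)\le\max(f_0,f_1)$. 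Matching supports are inherited from the endpoints, keeping every quantity finite. This proves part (1); the forward direction is immediate since both endpoints are realized by separable inputs.

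For part (2) I would append the dual requirement. Computing the adjoint gives $\Lambda_{\gamma,\delta}^{\dagger}(\sigma)=\mathrm{tr}(\sigma\gamma)\,\Psi_m+\mathrm{tr}(\sigma\delta)\,(\mathbb{I}-\Psi_m)$, an isotropic operator of the form $a\Psi_m+b(\mathbb{I}-\Psi_m)$ with $a=\mathrm{tr}(\sigma\gamma)\ge 0$ and $b=\mathrm{tr}(\sigma\delta)\ge 0$. Applying the isotropic separability criterion to this operator converts the requirement $\Lambda_{\gamma,\delta}^{\dagger}(Sep)\subset cone(Sep)$ into a uniform upper bound on the ratio $\mathrm{tr}(\sigma\gamma)/\mathrm{tr}(\sigma\delta)$ over separable $\sigma$, giving the stated condition $\sup_{\sigma\in Sep}\frac{\mathrm{tr}\sigma\gamma}{\mathrm{tr}\sigma\delta}\le m-1$. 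Since by definition $\mathcal{DNE}_{\varepsilon}=\mathcal{NE}_{\varepsilon}\cap\{\text{dual sends }Sep\text{ into }cone(Sep)\}$, part (2) is precisely the conjunction of the part-(1) conditions with this ratio bound.

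Finally, for part (3), parts (1)–(2) show that the sole obstruction to upgrading an $\mathcal{NE}_{\varepsilon}$ channel to $\mathcal{DNE}_{\varepsilon}$ is failure of the dual ratio bound, the endpoint conditions of part (1) being already in force. I would therefore keep the prepared state $\gamma$ and replace $\delta$ by a $\delta'$ obtained by appending one ancillary ebit, which enlarges the effective reference from $\Psi_m$ to $\Psi_{2m}$ and simultaneously increases the weight of $\delta'$ on the separable cone; the aim is to choose $\delta'$ so that the ratio bound holds on the enlarged space while the endpoint conditions $D_{\Omega,\overline{Sep}}(\delta')\le\varepsilon$ and $D_{\Omega,\overline{Sep}}(\tfrac1{2m}\gamma+\tfrac{2m-1}{2m}\delta')\le\varepsilon$ are preserved. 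The main obstacle is exactly this last step: one must construct $\delta'$ explicitly and certify that restoring the dual constraint—paid for by a single ebit of overhead, consistent with Corollary~\ref{cor1}—does not destroy the approximate-nonentangling guarantee established in part (1).
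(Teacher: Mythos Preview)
Your treatment of parts (1) and (2) is correct and essentially matches the paper's argument. The paper reduces to isotropic inputs via the identity $\Lambda_{\gamma,\delta}\circ\mathcal{T}=\Lambda_{\gamma,\delta}$ and then cites quasi-convexity of $D_{\Omega,\overline{Sep}}(\cdot)$; your parametrization $\omega_p=p\gamma+(1-p)\delta$ with $p=\mathrm{tr}(\sigma\Psi_m)\in[0,1/m]$ accomplishes the same reduction directly, and your sandwich-interpolation argument is a self-contained proof of exactly that quasi-convexity. Part (2) is identical to the paper's computation of the adjoint and the isotropic separability criterion.

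Part (3), however, is incomplete, as you yourself flag. The missing piece is not an abstract obstacle: the paper simply writes down
\[
\delta'=\tfrac{1}{2m}\gamma+\tfrac{2m-1}{2m}\delta=\tfrac{1}{2}\,\delta+\tfrac{1}{2}\bigl(\tfrac{1}{m}\gamma+\tfrac{m-1}{m}\delta\bigr),
\]
i.e.\ the midpoint of the two endpoint substates from part (1). This choice makes the $\mathcal{NE}_{\varepsilon}$ condition for $\delta'$ immediate by the same quasi-convexity you proved, and the dual ratio $\mathrm{tr}(\sigma\gamma)/\mathrm{tr}(\sigma\delta')$ is then bounded by a direct computation. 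Your phrase ``$\delta'$ obtained by appending one ancillary ebit'' is misleading here: nothing is tensored on; the single ebit of overhead refers only to passing from $\Psi_m$ to $\Psi_{2m}$ on the input side, while $\delta'$ lives in the \emph{same} output space and is just a convex mixture of $\gamma$ and $\delta$. Once you see that $\delta'$ should be the midpoint of the two part-(1) endpoints, the verification is mechanical.
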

	\begin{proof}
		\begin{itemize}
			\item[(1).] As $\Lambda_{\gamma,\delta}\circ\mathcal{T}=\Lambda_{\gamma,\delta}$, and $\mathcal{T}$ is $\mathcal{NE}$, we only need to address the situation when the input state is an isotropic state. Moreover, as any separable isotropic state can be written as $p\frac{\mathbb{I}-\Psi_m}{m^2-1}+(1-p)(\frac{1}{m}\Psi_m+\frac{m-1}{m}\frac{\mathbb{I}-\Psi_m}{m^2-1})$, $p\in[0,1]$, then $\Lambda_{\gamma,\delta}\in NE_{\delta}$ if and only if
			\begin{align}
				\max{(D_{\Omega,\overline{Sep}}(\Lambda_{\gamma,\delta}(\tau_m)),D_{\Omega,\overline{Sep}}(\Lambda_{\gamma,\delta}(\frac{1}{m}\Psi_m+\frac{m-1}{m}\tau_m)))}<\varepsilon.
			\end{align}
			The above formula is due to that $D_{\Omega,Sep}(\cdot)$ is quasi-convex \cite{regula2022tight}.
			\item[(2).] As when $\Lambda$ is dually nonentangling, $\Lambda$ is nonentangling and $\Lambda^{\dagger}(Sep)\subset cone(Sep).$ Next $\Lambda_{\gamma,\delta}^{\dagger}(X)=\mathrm{tr}(X\gamma)\Psi+\mathrm{tr}(X\delta)(\mathbb{I}-\Psi)$, and $a\Psi_m+b(\mathbb{I}-\Psi_m)\in Sep$ if and only if $b\ge0$ and $a\in [0,b(m-1)],$ then we finish the proof.
			\item[(3).] As $\Lambda_{\gamma,\delta}\in \mathcal{NE}_{\epsilon},$ 
			\begin{align*}
				\max{(D_{\Omega,\overline{Sep}}(\delta)),D_{\Omega,\overline{Sep}}(\Lambda_{\gamma,\delta}(\frac{1}{m}\gamma+\frac{m-1}{m}\delta)))}	\le \epsilon,
			\end{align*} 
			let $\delta^{'}=\frac{1}{2m}\gamma+\frac{2m-1}{2m}\delta$, then 
			\begin{align*}
				D_{\Omega,\overline{Sep}}(\delta^{'})=&D_{\Omega,\overline{Sep}}(\frac{1}{2}(\frac{1}{m}\gamma+\frac{m-1}{m}\delta)+\frac{1}{2}\delta)\\
				\le& \max(D_{\Omega,\overline{Sep}}(\delta),D_{\Omega,\overline{Sep}}(\frac{1}{m}\gamma+\frac{m-1}{m}\delta))\\
				\le& \epsilon,
			\end{align*}
			next assume $\sigma$ is any separable state,
			\begin{align*}
				\frac{\mathrm{tr}\sigma\gamma}{\mathrm{ tr}\sigma\delta^{'}}=&\frac{\mathrm{tr}\sigma\gamma}{\mathrm{ tr}\sigma(\frac{1}{2m}\gamma+\frac{2m-1}{2m}\delta)}\\
				=&\frac{\frac{\mathrm{tr}\sigma\gamma}{\mathrm{ tr}\sigma\delta}}{\frac{\mathrm{ tr}\sigma\gamma}{2m\mathrm{\sigma\delta}}+\frac{2m-1}{2m}}\\
				\le &\frac{m-1}{\frac{m-1}{2m}+\frac{2m-1}{2m}}\le m-1,
			\end{align*}
			hence, based on (2), we finish the proof.
		\end{itemize}
	\end{proof}
	
	\subsection{Probabilistic entanglement distillation exponents under (approximately) $\mathcal{DNE}$ instruments}\label{aa1}
	
	In this section, we will analyse the error exponents of entanglement distillation  under the (approximately) $\mathcal{NE}(\mathcal{DNE})$ instruments. First we present the analytical formula of the error exponents of entanglement distillation, $E_{d,err,p}^{(m),\mathcal{F}_{\delta}}(\cdot),$  under $\mathcal{F}=\{\mathcal{NE},\mathcal{DNE}\}$ instruments.
	\begin{lemma}\label{lfd}
		Assume $\rho_{AB}$ is a bipartite state, then its probabilistic distillation exponent for the maximally entangled state $\ket{\psi_m}$ under the $\mathcal{F}_{\delta}$ instruments, $E_{d,err,p}^{(m),\mathcal{F}_{\delta}}(\rho_{AB})$, can be rewritten as
		\begin{align*}
			E_{d,err,p}^{(m),\mathcal{F}_{\delta}}(\rho_{AB})=&\sup\lim_{n\rightarrow\infty}-\frac{1}{n}\log\epsilon_n\\
			\textit{s. t.}\hspace{4mm}&  \frac{\mathrm{tr}M\rho_{AB}^{\otimes n}}{\mathrm{ tr}(M+N)\rho_{AB}^{\otimes n}}\ge 1-\epsilon_n,
			\mathcal{E}_i(X)=\mathrm{ tr}MX\cdot\Psi_m+\mathrm{ tr}NX\cdot\tau_m,\\
			&M+N\le \mathbb{I},M,N\ge 0,\\
			&\mathcal{E}_i\in \mathcal{E},\mathcal{E}\in \mathbb{O}_{\mathcal{F}_{\delta}},\mathcal{F}=\{\mathcal{NE},\mathcal{DNE}\}.
		\end{align*}
		Here $\tau_m=\frac{\mathbb{I}-\Psi_m}{m^2-1}$.
	\end{lemma}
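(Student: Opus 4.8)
The plan is to show that the two optimization problems have the same value by reducing, without loss of generality, to output subchannels of the twirled form $\Lambda_{M,N}$, and then evaluating the fidelity constraint exactly. The starting observation is that the target $\Psi_m=\ket{\psi_m}\bra{\psi_m}$ is \emph{pure}, so for any substate $\omega$ the root fidelity collapses to an overlap, $F(\omega,\Psi_m)=\lVert\sqrt{\omega}\,\Psi_m\rVert_1=\sqrt{\mathrm{tr}(\Psi_m\omega)}$. Hence the constraint $F(\mathcal{E}_i(\rho_{AB}^{\otimes n})/\mathrm{tr}\,\mathcal{E}_i(\rho_{AB}^{\otimes n}),\Psi_m)\ge 1-\epsilon_n$ is governed entirely by the normalized overlap $\mathrm{tr}(\Psi_m\mathcal{E}_i(\rho_{AB}^{\otimes n}))/\mathrm{tr}\,\mathcal{E}_i(\rho_{AB}^{\otimes n})$, which is the quantity I want to reshape into the stated linear-fractional form.

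The central step is a twirling reduction: I would post-compose each output subchannel $\mathcal{E}_i$ with the twirl $\mathcal{T}$. By (\ref{sd2}), $\mathcal{T}\circ\mathcal{E}_i$ automatically has the claimed form $\Lambda_{M,N}(X)=\mathrm{tr}(MX)\,\Psi_m+\mathrm{tr}(NX)\,\tau_m$ with $M=\mathcal{E}_i^{\dagger}(\Psi_m)$ and $N=\mathcal{E}_i^{\dagger}(\mathbb{I}-\Psi_m)$; since $\mathcal{E}_i$ is completely positive and trace-nonincreasing, $M,N\ge 0$ and $M+N=\mathcal{E}_i^{\dagger}(\mathbb{I})\le\mathbb{I}$, which are exactly the stated constraints. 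The key point is that this replacement is \emph{invariant} at the level of the objective: because $\mathrm{tr}(\Psi_m\Psi_m)=1$ and $\mathrm{tr}(\Psi_m\tau_m)=0$, one checks $\mathrm{tr}(\Psi_m\mathcal{T}(\omega))=\mathrm{tr}(\Psi_m\omega)$ while $\mathrm{tr}\,\mathcal{T}(\omega)=\mathrm{tr}\,\omega$, so the normalized overlap, and therefore $F(\,\cdot\,,\Psi_m)$, is preserved exactly under $\mathcal{E}_i\mapsto\mathcal{T}\circ\mathcal{E}_i$. Thus every feasible protocol can be replaced by one of the special form without changing the achieved error $\epsilon_n$, and conversely any special-form subchannel satisfying the overlap constraint is feasible for the original problem; this yields the two-sided comparison needed for the equality.

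I expect the main obstacle to be checking that the twirled instrument stays inside $\mathbb{O}_{\mathcal{F}_\delta}$. Trace preservation survives trivially because $\sum_i\mathcal{T}\circ\mathcal{E}_i=\mathcal{T}\circ(\sum_i\mathcal{E}_i)$ is a composition of channels. For $\mathcal{NE}_\delta$ I would use that $\mathcal{T}$ is a (LOCC) nonentangling channel mapping $\overline{Sep}$ into $\overline{Sep}$, so combining the data-processing inequality for the Hilbert projective metric (Lemma \ref{l1}(4)) with $D_{\Omega,\overline{Sep}}(\cdot)=\min_{\tau\in\overline{Sep}}D_\Omega(\cdot,\tau)$ gives $D_{\Omega,\overline{Sep}}(\mathcal{T}\circ\mathcal{E}_i(\sigma))\le D_{\Omega,\overline{Sep}}(\mathcal{E}_i(\sigma))\le\delta$ for every $\sigma\in Sep$. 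For $\mathcal{DNE}_\delta$ the dual condition requires more care: exploiting that $\mathcal{T}$ is self-adjoint and maps $Sep$ into $cone(Sep)$, I would write $(\mathcal{T}\circ\mathcal{E}_i)^{\dagger}(Sep)=\mathcal{E}_i^{\dagger}(\mathcal{T}(Sep))\subseteq\mathcal{E}_i^{\dagger}(cone(Sep))\subseteq cone(Sep)$, so the dual nonentangling property is inherited as well.

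Finally, substituting the special form into the overlap and using $\mathrm{tr}\,\Psi_m=\mathrm{tr}\,\tau_m=1$, $\mathrm{tr}(\Psi_m\tau_m)=0$ gives $\mathrm{tr}(\Psi_m\Lambda_{M,N}(\rho^{\otimes n}))/\mathrm{tr}\,\Lambda_{M,N}(\rho^{\otimes n})=\mathrm{tr}(M\rho^{\otimes n})/\mathrm{tr}((M+N)\rho^{\otimes n})$, so the fidelity constraint becomes $F^2=\mathrm{tr}(M\rho^{\otimes n})/\mathrm{tr}((M+N)\rho^{\otimes n})\ge(1-\epsilon_n)^2$. The last step is a harmless reparametrization: since $1-F\le 1-F^2\le 2(1-F)$, replacing the overlap threshold $(1-\epsilon_n)^2$ by $1-\epsilon_n$ alters $\epsilon_n$ only by a bounded multiplicative factor, which leaves the exponent $\lim_{n\to\infty}-\tfrac1n\log\epsilon_n$ unchanged and produces exactly the constraint $\mathrm{tr}(M\rho^{\otimes n})/\mathrm{tr}((M+N)\rho^{\otimes n})\ge 1-\epsilon_n$ stated in the lemma.
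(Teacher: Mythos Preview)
Your proposal is correct and follows essentially the same route as the paper: post-compose with the isotropic twirl $\mathcal{T}$, identify $M=\mathcal{E}_i^{\dagger}(\Psi_m)$ and $N=\mathcal{E}_i^{\dagger}(\mathbb{I}-\Psi_m)$, and read off the overlap as $\mathrm{tr}(M\rho^{\otimes n})/\mathrm{tr}((M+N)\rho^{\otimes n})$. If anything, your version is more careful than the paper's, which simply asserts $\mathcal{T}\circ\mathcal{E}\in\mathbb{O}_{\mathcal{F}_\delta}$ without spelling out the data-processing/self-adjointness arguments you give, and which silently identifies $F$ with the overlap $\bra{\psi_m}\,\cdot\,\ket{\psi_m}$ rather than handling the $F$ versus $F^2$ discrepancy via your bounded-factor reparametrization.
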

	\begin{proof}
		As $\mathcal{T}(\cdot)=\int_U (U\otimes\overline{U})^{\dagger}(\cdot)(U\otimes\overline{U})\in \mathcal{F}$, when $\mathcal{E}_i(\cdot)\in \mathcal{E}$ and $\mathcal{E}\in\mathbb{O}_{\mathcal{F}_{\delta}},$ $\mathcal{T}\circ\mathcal{E}_i\in \mathcal{T}\circ\mathcal{E}$, $\mathcal{T}\circ\mathcal{E}\in\mathbb{O}_{\mathcal{F}_{\delta}}$.  then
		\begin{align*}
			F(\frac{\mathcal{E}_i(\rho_{AB}^{\otimes n})}{\mathrm{tr}(\mathcal{E}_i(\rho_{AB}^{\otimes n}))},\Psi_m)=&\bra{\psi_m}\frac{\mathcal{E}_i(\rho_{AB}^{\otimes n})}{\mathrm{ tr}\mathcal{E}_i(\rho_{AB}^{\otimes n})}\ket{\psi_m}\\
			=&\frac{\int_U\bra{\psi_m}(U\otimes \overline{U})^{\dagger}\mathcal{E}_i(\rho_{AB}^{\otimes n})(U\otimes\overline{U})\ket{\psi_m}dU}{\mathrm{ tr}\mathcal{E}_i(\rho_{AB}^{\otimes n})}\\
			=&\frac{\mathrm{ tr}\Psi_m(\mathrm{ tr}\mathcal{E}_i^{\dagger}(\Psi_m)\rho^{\otimes n}\cdot\Psi_m+\mathrm{ tr}\mathcal{E}_i^{\dagger}(\mathbb{I}-\Psi_m)\rho_{AB}^{\otimes n}\cdot\tau)}{\mathrm{ tr}\mathcal{E}_i(\rho_{AB}^{\otimes n})}\\
			=&\frac{\mathrm{ tr}M\rho_{AB}^{\otimes n}}{\mathrm{tr}(M+N)\rho_{AB}^{\otimes n}},
		\end{align*}
		Here $M=\mathcal{E}_i^{\dagger}(\Psi_m)$ and $N=\mathcal{E}_i^{\dagger}(\mathbb{I}-\Psi_m)$. As $\mathcal{E}_i$ is a subchannel, $M+N=\mathcal{E}_i(\mathbb{I})\le \mathbb{I}$.
		Hence, we finish the proof.
	\end{proof}

	\emph{Theorem \ref{t1}:} 
	Assume $\rho_{AB}^{\otimes n}$ is a bipartite state on $\mathcal{H}_{AB}^{\otimes n}$. Let $m\in \mathbb{N}$, $\delta\ge 0$, and the distillation is limited to $\mathcal{NE}_{\delta}$. Then the error exponent of the entanglement distillation is 
	\begin{align*}
		E_{d,err,p}^{(m),\mathcal{NE}_{\delta}}(\rho^{\otimes n}_{AB})=\frac{1}{n}\hat{\beta}_{\frac{2^{\delta}}{2^{\delta}+m-1},Sep}(\rho_{AB}^{\otimes n}).
	\end{align*}

Assume $\{m_n\}_n$ is a sequence of natural numbers with $\lim\limits_{n\rightarrow\infty}\frac{\log m_n}{n}=0,$ then
	\begin{align*}
		E_{d,err,p}^{\mathcal{NE}_{\delta}}(\rho_{AB})=\lim\limits_{n\rightarrow\infty}E_{d,err,p}^{(m_n),\mathcal{NE}_{\delta}}(\rho^{\otimes n}_{AB})=\hat{D}_{\Omega,Sep}^{reg }(\rho).
	\end{align*}
	\begin{proof}
		\emph{Meta-converse:}
		Assume $\{\mathcal{E}_i\}_{i=1}^k$ is a feasible quantum $\mathcal{NE}_{\delta}$ instruments such that
		\begin{align}
			1-\epsilon_n\le& F(\frac{\mathcal{E}_i(\rho^{\otimes n})}{\mathrm{tr}(\mathcal{E}_i(\rho^{\otimes n}))},\Psi_m)\nonumber\\
			=&\bra{\psi_m}\frac{\mathcal{E}_i(\rho^{\otimes n})}{\mathrm{tr}(\mathcal{E}_i(\rho^{\otimes n}))}\ket{\psi_m}\nonumber\\
			=&\frac{\mathrm{tr}[\rho_{AB}^{\otimes n}\mathcal{E}_i^{\dagger}(\Psi_m)]}{p_n},\label{tf0}
		\end{align}
		here $\mathcal{E}_i^{\dagger}(\cdot)$ satisfies $\mathrm{tr}(\mathcal{E}_i^{\dagger}(A)B)=\mathrm{tr}(A\mathcal{E}_i(B))$, $p_n=\mathrm{tr}(\mathcal{E}_i^{\dagger}(\mathbb{I})\rho_{AB}^{\otimes n})$.
		
		Let $M^{(n)}_2=\mathcal{E}_i^{\dagger}(\Psi_m), M^{(n)}_1=\mathcal{E}^{\dagger}_i(\mathbb{I}-\Psi_m)$, $M^{(n)}_0=\mathbb{I}-M^{(n)}_1-M_2^{(n)}$. As $\mathcal{E}_i$ is completely positive trace nonincreasing and $\mathbb{I}-\Psi_m\ge 0$, $M^{(n)}_1,M^{(n)}_2\ge 0$. As $\sum_{i=1}^k\mathcal{E}_i$ is trace preserving, then $\sum_i\mathcal{E}_i^{\dagger}(\mathbb{I})=\mathbb{I}$, 
		\begin{align*}
			M^{(n)}_0=&\mathbb{I}-M^{(n)}_1-M^{(n)}_2\\
			=&\sum_{\{1,2,\cdots,k\}-i}\mathcal{E}_l^{\dagger}(\mathbb{I})\ge 0,
		\end{align*} 
		Hence, $\{M^{(n)}_0,M^{(n)}_1,M^{(n)}_2\}$ is a POVM. Next based on 	(\ref{tf0}), we have
		\begin{align}
			\frac{\mathrm{tr}M_1^{(n)}\rho^{\otimes n}}{\mathrm{ tr}(M_1^{(n)}+M_2^{(n)})\rho^{\otimes n}}\le \epsilon_n.
		\end{align}
		
		Assume $\sigma_n$ is an arbitrary separable state in $\mathcal{H}_{AB}^{\otimes n}$, then
		\begin{align*}
			\frac{\mathrm{tr}M^{(n)}_2\sigma_n}{\mathrm{tr}(M^{(n)}_1+M^{(n)}_2)\sigma_n}=&\frac{\mathrm{tr}(\mathcal{E}_i(\sigma_n)\Psi_m)}{\mathrm{tr}\mathcal{E}_i(\sigma_n)}\\
			\le& \frac{2^{\delta}}{2^{\delta}+m-1},
		\end{align*}
		the last inequality is due to Lemma \ref{dm}. Thus, based on the definition of postselected hypothesis testing, we have
		\begin{align*}
			\hat{\beta}_{\frac{2^{\delta}}{2^{\delta}+m-1}, Sep}(\rho_{AB}^{\otimes n})\ge& \log\frac{\mathrm{tr}(M^{(n)}_1+M^{(n)}_2)\rho^{\otimes n}}{\mathrm{tr}(M^{(n)}_1\rho^{\otimes n})}\\
			\ge&-\log\epsilon_n.
		\end{align*}
		Then multiplying two sides $\frac{1}{n}$, and when taking the supremum over all $\{\mathcal{E}_i\}_{i=1}^k\in NE,$ we have
		\begin{align}
			E^{(m),\mathcal{NE}_{\delta}}_{d,err,p}(\rho_{AB} )\le&	\frac{1}{n}\min_{\sigma_n\in Sep_{A_n:B_n}} \hat{\beta}_{\frac{2^{\delta}}{2^{\delta}+m-1}}(\rho_{AB}^{\otimes n},\sigma_n)\nonumber\\
			=&\frac{1}{n} \hat{\beta}_{\frac{2^{\delta}}{2^{\delta}+m-1},Sep}(\rho_{AB}^{\otimes n})\label{a1}
		\end{align}

		$Achievability:$ Assume $\{M^{(n)}_1,M^{(n)}_2,M^{(n)}_0|M^{(n)}_i\ge 0,i=1,2,0\}$ is a POVM that achieves the optimal type II error $\hat{\beta}_{\frac{2^{\delta}}{2^{\delta}+m-1},Sep}(\rho^{\otimes n})$ with $\frac{\mathrm{tr}M^{(n)}_2\sigma_n}{\mathrm{ tr}M_1^{(n)}\sigma_n}\le \frac{2^{\delta}}{m-1}$ for any $\sigma_n\in Sep(A_n:B_n)$. Next we construct the following subchannel $\mathcal{E}_1(\cdot)$, $\mathcal{E}_2(\cdot)$,
		\begin{align*}
			\mathcal{E}_1(\cdot)=\mathrm{tr}(M^{(n)}_2(\cdot))\Psi_m+\mathrm{tr}M^{(n)}_1(\cdot)\frac{\mathbb{I}-\Psi_m}{m^2-1},\\
			\mathcal{E}_2(\cdot)=\mathrm{tr}M_0^{(n)}(\cdot)\frac{\mathbb{I}-\Psi_m}{m^2-1}.
		\end{align*}
		For any $\sigma_n\in Sep_{A_n:B_n}$, as $\mathbb{I}-\Psi_m$ is separable, $\mathcal{E}_2$ is $\mathcal{NE}$. Based on Lemma \ref{dm}, and  $\frac{\mathrm{tr}M^{(n)}_2\sigma_n}{\mathrm{ tr}M_1^{(n)}\sigma_n}\le \frac{2^{\delta}}{m-1}$, $\mathcal{E}_1(\cdot)$ is in $\mathcal{NE}_{\delta}.$ As $\mathrm{tr}(\mathcal{E}_1(\cdot)+\mathcal{E}_2(\cdot))=\mathrm{tr}(\cdot)$, and $\mathcal{E}_1$ and $\mathcal{E}_2$ is completely positive, $\{\mathcal{E}_1,\mathcal{E}_2\}$ is an $\mathcal{NE}$ instruments. Then we have
		\begin{align*}
			E_{d,err,p}^{(m),\mathcal{NE}_{\delta}}(\rho^{\otimes n}_{AB} )\ge& -\frac{1}{n}\log\frac{\mathrm{tr}M_1^{(n)}\rho^{\otimes n}}{\mathrm{tr}(M_1^{(n)}+M_2^{(n)})\rho^{\otimes n}},
		\end{align*}
		when taking over all measurements, we have
		\begin{align}
			E_{d,err,p}^{(m),\mathcal{NE}_{\delta}}(\rho_{AB}^{\otimes n} )\ge& \frac{1}{n}\hat{\beta}_{\frac{2^{\delta}}{2^{\delta}+m-1},Sep}(\rho^{\otimes n}).\label{a2}
		\end{align}
		
		Based on (\ref{a1}) and (\ref{a2}), we have
		\begin{align}
			E_{d,err,p}^{(m),\mathcal{NE}_{\delta}}(\rho^{\otimes n} )= \frac{1}{n}\hat{\beta}_{\frac{2^{\delta}}{2^{\delta}+m-1},Sep}(\rho^{\otimes n})\label{a3}
		\end{align}

Assume $\{m_n\}_{n}$ is sequence with $\lim\limits_{n\rightarrow\infty}\frac{\log m_n}{n}=0,$ \begin{align*}
	\lim\limits_{n\rightarrow\infty}\frac{1}{n}\log\frac{1}{1-\epsilon_n}=\lim\limits_{n\rightarrow\infty}\frac{1}{n}(\log(2^{\delta}+m_n-1)-\log(m_n-1))=0,\\
	\lim\limits_{n\rightarrow\infty}\frac{1}{n}\log\frac{\epsilon_n}{(1-\epsilon_n)}=\lim\limits_{n\rightarrow\infty}\frac{1}{n}(\delta-\log(m_n-1))=0.
\end{align*}
combining Corollary \ref{c1} and (\ref{a3}), we have
		\begin{align*}
			E^{\mathcal{NE}_{\delta}}_{d,err,p}(\rho_{AB})=\lim\limits_{n\rightarrow\infty}\frac{1}{n}\hat{\beta}_{\frac{2^{\delta}}{2^{\delta}+m_n-1},Sep}(\rho^{\otimes n})=&\hat{D}_{\Omega,Sep}^{reg}(\rho).
		\end{align*}
		Hence, we finish the proof.
	\end{proof}
	
	\emph{Theorem \ref{th2}} 	Assume $\rho_{AB}^{\otimes n}$ is a bipartite state on $\mathcal{H}_{AB}^{\otimes n}$. Let $\delta\ge 0$ and $\{m_n\}_n$ is a sequence of natural numbers. Then the error exponent of the entanglement distillation is 
	\begin{align*}
	\hat{\beta}^{\mathbb{SEP}}_{\frac{1}{m_n+1}+\frac{2^{\delta}m_n}{(m_n+1)(2^{\delta}+m_n-1)},Sep}(\rho_{AB}^{\otimes n})- \log\frac{m_n+1}{m_n}\ge	nE_{d,err,p}^{(m_n),\mathcal{DNE}_{\delta}}(\rho_{AB}^{\otimes n})\ge& \hat{\beta}_{\frac{2^{\delta}}{2^{\delta}+m_n-1},Sep}^{\mathbb{SEP}}(\rho_{AB}^{\otimes n}).
		\end{align*}
	Furthermore, when $\{m_n\}_n$ is sequence with $\lim\limits_{n\rightarrow\infty}\frac{\log m_n}{n}=0,$ let $n\rightarrow \infty,$
	\begin{align*}
		E^{\mathcal{DNE}_{\delta}}_{d,err,p}(\rho_{AB})=\hat{D}_{\Omega,Sep}^{reg,\mathbb{SEP}}(\rho_{AB}).
	\end{align*}
	\begin{proof}
	Assume $\{M^{(n)},N^{(n)},I-M^{(n)}-N^{(n)}\}$ is the optimal for $\rho_{AB}^{\otimes n}$ for $\hat{\beta}^{\mathbb{SEP}}_{\frac{2^{\delta}}{2^{\delta}+m_n-1},Sep}(\rho^{\otimes n})$, then $M^{(n)},N^{(n)},I-M^{(n)}-N^{(n)}\in cone(Sep)$, and $\sup_{\sigma\in Sep}\frac{\mathrm{ tr}M^{(n)}\sigma}{\mathrm{ tr}N^{(n)}\sigma}\le \frac{2^{\delta}}{m_n-1}.$ then $N^{(n)},M^{(n)}+(m-1)N^{(n)}\in cone(Sep)$. Based on Lemma \ref{lfd}, for $E_{d,err,p}^{(m_n),\mathcal{F}_{\delta}}(\rho_{AB}^{\otimes n})$, we could always assume $\mathcal{E}_i(X)=\mathrm{ tr}MX\cdot\Psi_{m_n}+\mathrm{ tr}NX\cdot\tau_{m_n}.$ Based on Lemma \ref{dm}, when choosing $M=N^{(n)},$ and $N=M^{(n)}$, $\mathcal{E}_1(\cdot)\in \mathcal{DNE}_{\epsilon}.$ Then 
		\begin{align}
			E_{d,err,p}^{(m_n),\mathcal{DNE}_{\delta}}(\rho_{AB}^{\otimes n})\ge -\frac{1}{n}\log\frac{\mathrm{ tr}N^{(n)}\rho^{\otimes n}}{\mathrm{ tr}(N^{(n)}+M^{(n)})\rho^{\otimes n}}=\frac{1}{n}\hat{\beta}_{\frac{2^{\delta}}{2^{\delta}+m_n-1},Sep}^{\mathbb{SEP}}(\rho_{AB}^{\otimes n}).\label{th2f1}
		\end{align}

			Next assume $\mathcal{E}=\{(\mathcal{E}_1,\mathcal{E}_2)|\mathcal{E}_1,\mathcal{E}_2\in \mathcal{DNE}_{\delta},\mathrm{tr}[\mathcal{E}_1(\cdot)+\mathcal{E}_2(\cdot)]=\mathrm{ tr}(\cdot)\}$ is a feasible one-shot distillation protocol such that 
		\begin{align*}
			F(\frac{\mathcal{E}_1(\rho^{\otimes n})}{\mathrm{ tr}\mathcal{E}_1(\rho^{\otimes n})},\Psi_{m_n})=\mathrm{ tr}\frac{\mathcal{E}_1(\rho^{\otimes n})}{\mathrm{ tr}\mathcal{E}_1(\rho^{\otimes n})}\Psi_{m_n}\ge 1-\epsilon.
		\end{align*}
		Here we always assume $\mathcal{E}_1(X)=\mathrm{ tr}MX\cdot\Psi_{m_n}+\mathrm{ tr}NX\cdot\tau_{m_n}$, $M+N\le \mathbb{I},$ $M,N\ge 0.$ As $\mathcal{E}_1\in\mathcal{DNE}_{\delta}$, $\sup_{X\in Sep}\frac{\mathrm{ tr}MX}{\mathrm{ tr}NX}\le \frac{2^{\delta}}{m_n-1}$ and $N,M+(m_n-1)N\in cone(Sep)$. Then let $M_2=\mathcal{E}_1^{\dagger}(\frac{1}{m_n+1}(\mathbb{I}+m_n\Psi_{m_n})),M_1=\mathcal{E}_1^{\dagger}(\frac{m_n}{m_n+1}(\mathbb{I}-\Psi_{m_n})),$ for any separable state $\sigma$,
		\begin{align*}
			\frac{\mathrm{ tr}M_2\sigma}{\mathrm{ tr}(M_1+M_2)\sigma}=&\frac{\mathrm{ tr}(\frac{1}{m_n+1}(\mathbb{I}+m_n\Psi_{m_n}))\mathcal{E}_1(\sigma)}{\mathrm{tr}(\mathcal{E}_1(\sigma))}\\
			\le&\frac{1}{m_n+1}+\frac{2^{\delta}m_n}{(m_n+1)(2^{\delta}+m_n-1)},
		\end{align*}
		here the last inequality is due to the definition of $\mathcal{DNE}_{\delta}$,
		then 
		\begin{align}
		\frac{1}{n}	\hat{\beta}^{\mathbb{SEP}}_{\frac{1}{m_n+1}+\frac{2^{\delta}m_n}{(m_n+1)(2^{\delta}+m_n-1)},Sep}(\rho_{AB}^{\otimes n})\ge& \frac{1}{n}\log \frac{\mathrm{ tr}(M_1+M_2)\rho_{AB}^{\otimes n}}{\mathrm{ tr}M_1\rho_{AB}^{\otimes n}}\nonumber\\
			=& \frac{1}{n}\log\frac{1}{\mathrm{ tr}\frac{m_n}{m_n+1}(\mathbb{I}-\Psi_{m_n})\frac{\mathcal{E}_1(\rho_{AB}^{\otimes n})}{\mathrm{tr}\mathcal{E}_1(\rho^{\otimes n})}}\nonumber\\
			\ge& E_{d,err,p}^{(m_n),\mathcal{DNE}_{\delta}}(\rho^{\otimes n}_{AB})+\frac{1}{n}\log\frac{m_n+1}{m_n} \label{th2f2}
		\end{align}
		Here the first inequality is due to the definition of $\hat{\beta}^{\mathbb{SEP}}_{\epsilon,Sep}(\rho)$.
		
		At last, based on (\ref{th2f1}) and (\ref{th2f2}), we have
		\begin{align}
			\lim\limits_{n\rightarrow\infty}\frac{1}{n}\hat{\beta}_{\frac{2^{\delta}}{2^{\delta}+m_n-1},Sep}^{\mathbb{SEP}}(\rho_{AB}^{\otimes n})\le\lim\limits_{n\rightarrow\infty}\frac{1}{n} E_{d,err,p}^{(m_n),\mathcal{DNE}_{\delta}}(\rho_{AB}^{\otimes n})\label{th2f3},\\
			\lim_{n\rightarrow\infty}\frac{1}{n}	\hat{\beta}^{\mathbb{SEP}}_{\frac{1}{m_n+1}+\frac{2^{\delta}m_n}{(m_n+1)(2^{\delta}+m_n-1)},Sep}(\rho^{\otimes n})\ge\lim\limits_{n\rightarrow\infty}\frac{1}{n} E_{d,err,p}^{(m_n),\mathcal{DNE}_{\delta}}(\rho_{AB}^{\otimes n})\label{th2f4}.
		\end{align}
		Hence, combining $(\ref{th2f3})$, (\ref{th2f4}) and Corollary \ref{c2}, we have
		\begin{align*}
			E_{d,err,p}^{\mathcal{DNE}_{\delta}}(\rho_{AB})=\hat{D}_{\Omega,Sep}^{reg,\mathbb{SEP}}(\rho).
		\end{align*}

	\end{proof}

	\emph{Example \ref{e1}:}Assume $\mathcal{H}_{AB}$ is a bipartite system with $dim(\mathcal{H}_A)=dim(\mathcal{H}_B)=d$, and $\rho_{AB}$ is the Werner state,
	\begin{align*}
		\rho_{p}=p\cdot\frac{2P_s}{d(d+1)}+(1-p)\cdot\frac{2P_{as}}{d(d-1)},
	\end{align*}
	here $P_s=\frac{I+F}{2}$, $P_{as}=\frac{I-F}{2}$, $F$ is the swap operator, $F=\sum_{ij}\ket{ij}\bra{ji}$. Then 
\begin{itemize}
	\item[(i).]
For each $n\in\mathbb{N}$,
	\begin{equation*}
		\frac{1}{n}D_{\Omega,Sep}(\rho_p^{\otimes n})= \begin{cases} 
			\log\frac{1-p}{p}\hspace{3mm}p<\frac{1}{2} \\
			0\hspace{12mm}p\ge\frac{1}{2}
		\end{cases}  .
	\end{equation*}
	Hence, 
	\begin{align*}
		\hat{D}_{\Omega,Sep}^{reg}(\rho_p)=\begin{cases} 
			\log\frac{1-p}{p}\hspace{3mm}p<\frac{1}{2} \\
			0\hspace{12mm}p\ge\frac{1}{2}
		\end{cases}  .
	\end{align*}
\item [	(ii).] 	\begin{align*}
	\hat{D}_{Sep}^{\mathbb{SEP}}(\rho^{\otimes n}_p)=\begin{cases} 
		\log\frac{d+1-2p}{2dp}\hspace{3mm}0<p<\frac{1}{2} \\
		0\hspace{20mm}p\ge\frac{1}{2}
	\end{cases}  .
\end{align*}
\end{itemize} 
	\begin{proof}\begin{itemize}
			\item[(i).] 
		Due to the faithfulness of $D_{\Omega}(\cdot)$ in Lemma \ref{l1}, and when $p\ge \frac{1}{2}$, $\rho_{AB}$ is separable \cite{reduction1999}, we only need to consider the case when $p<\frac{1}{2}$.
		
		As $p_{\frac{1}{2}}$ is separable,
		\begin{align*}
			\frac{1}{n}D_{\Omega,Sep}(\rho_{p}^{\otimes n})\le& \frac{1}{n}D_{\Omega}(\rho_{p}^{\otimes n},\rho_{\frac{1}{2}}^{\otimes n})\\
			=&D_{\Omega}(\rho_p,\rho_{\frac{1}{2}}),
		\end{align*} 
		As $P_s$ and $P_{as}$ are two mutually orthogonal projectors, then 
		\begin{align*}
			D_{\Omega}(\rho_p,\rho_{\frac{1}{2}})=&\hspace{2mm}\log\inf\hspace{4mm} \mu\\
			\textit{s. t.}\hspace{3mm}&(1,1)\le(2p\lambda,2(1-p)\lambda)\le(\mu,\mu),\\
			&\lambda,\mu\ge 0.
		\end{align*}
		From computation, we have 
		\begin{align}
			D_{\Omega,Sep}(\rho_p)\le	D_{\Omega}(\rho_p,\rho_{\frac{1}{2}})=\log\frac{1-p}{p}. \label{lf1}\\
			\frac{1}{n}D_{\Omega,Sep}(\rho_p^{\otimes n})\le		\frac{1}{n}D_{\Omega}(\rho^{\otimes n}_p,\rho^{\otimes n}_{\frac{1}{2}})=\log\frac{1-p}{p}.\label{lf2}
		\end{align}Here $n$ is an arbitrary natural number. Next we show the other direction. 
		
		Next we show the dual problem of $\Omega_{Sep}(\cdot)$,
		\begin{align}
			\Omega_{Sep}(\rho)=\hspace{2mm}&\hspace{2mm}\sup\mathrm{tr}(A\rho)\label{dsep}\\
			\textit{s. t.}\hspace{4mm}&\mathrm{tr}B\rho=1,\nonumber\\
			&\mathrm{tr}(B-A)\sigma\ge 0\hspace{3mm}\forall \sigma\in Sep,\nonumber\\
			&A,B\ge 0.\nonumber
		\end{align} 
		Next let
		\begin{align*}
			A=\frac{1}{p}P_{as},\hspace{3mm}B=\frac{1}{p}P_{s}.
		\end{align*}
		Due to computation, $\mathrm{tr}A\rho=\frac{1-p}{p},$ $\mathrm{tr}B\rho=1$. Next we show the last condition, when $\sigma$ is any separable state,
		\begin{align*}
			\mathrm{tr}(B-A)\sigma=&\frac{1}{p}\mathrm{tr}\frac{I+F-I+F}{2}\sigma\\\
			=&		\frac{1}{p}\mathrm{ tr}F\sigma\ge 0.
		\end{align*}
		Hence, $A$ and $B$ are feasible for the dual program of $\Omega_{Sep}(\cdot)$, then
		\begin{align}
			D_{\Omega,{Sep}}(\rho)\ge \log\frac{1-p}{p}. \label{rf1}
		\end{align}
		Combining $(\ref{lf1})$ and (\ref{rf2}), we have $$D_{\Omega,Sep}(\rho_p)=\log\frac{1-p}{p}.$$ For the state $\rho_p^{\otimes n}$, let 
		\begin{align*}
			A^{(n)}=A^{\otimes n},B^{(n)}=B^{\otimes n},
		\end{align*}
		Due to the computation, $\mathrm{tr}B^{(n)}\rho_p^{\otimes n}=1,$ for any separable state $\sigma_n\in Sep_{A_n:B_n}$,
		\begin{align*}
			&\mathrm{tr}(B^{(n)}-A^{(n)})\sigma_n\\
			&=\mathrm{tr}(\Pi_1+\Pi_3+\cdots+\Pi_{2\lceil\frac{n}{2}\rceil-1})\sigma_n,
		\end{align*}
		here $\Pi_m$ is a sum of all product opertors with $m$ $F$ and $n-m$ $I$. As each $\Pi_m$ satisfies $\mathrm{tr}\Pi_m\sigma\ge 0$, the above formula is nonnegative. Hence $A^{(n)}$ and $B^{(n)}$ are the feasible for $\rho_p^{\otimes n}$ in terms of $(\ref{dsep})$ for $\Omega_{Sep}$, then
		\begin{align}
			\frac{1}{n}	D_{\Omega,Sep}(\rho^{\otimes n})\ge \log\frac{1-p}{p},\label{rf2}
		\end{align}
		combining (\ref{lf2}) and (\ref{rf2}), we have $$\frac{1}{n}D_{\Omega,Sep}(\rho^{\otimes n}_p)=\log\frac{1-p}{p}.$$
		
		\item[(ii).]  Based on Corollary \ref{c2}, 
		\begin{align}
			\Omega_{Sep}^{\mathbb{SEP}}(\rho_p)=&\inf \gamma\label{dsep}\\
			\textit{s. t.}\hspace{3mm}&\rho_p\preceq_{\mathbb{SEP}^{*}}t{\sigma}\preceq_{\mathbb{SEP}^{*}}\gamma\rho_p,\nonumber\\
		&	{\sigma}\in Sep, t>0.\nonumber
		\end{align}
		
		As $\rho_p$ satisfies $(U\otimes{U})^{\dagger} \rho_p(U\otimes{U})=\rho_p$, and $\int_UdU(U\otimes{U})^{\dagger} {\sigma}(U\otimes {U})=a \frac{2P_s}{d(d+1)}+(1-a)\frac{2P_{as}}{d(d-1)},$ as ${\sigma}\ge 0$, $a\ge \frac{ 1}{2}.$ As for any Werner state $W=m\frac{2P_s}{d(d+1)}+n\frac{2P_{as}}{d(d-1)}$, 
		\begin{align*}
			W\succeq_{\mathbb{SEP}^{*}}0\Longleftrightarrow \mathrm{ tr}W(\ket{xy}\bra{xy})=\frac{m(1+|\bra{x}y\rangle|^2)}{d(d+1)}+\frac{n(1-|\bra{x}y\rangle|^2}{d(d-1)})\ge0\Longleftrightarrow m\ge 0, (d-1)m+(d+1)n\ge 0,
		\end{align*}

		 Hence, (\ref{dsep}) turns into the following,
		\begin{align*}
			\Omega_{Sep}^{\mathbb{SEP}}(\rho_p)=&\inf\gamma\\
			\textit{s. t.}\hspace{3mm}& 	ta-p\ge 0, (d-1)(ta-p)+(d+1)(t(1-a)-1+p)\ge 0,\\
		&	\gamma p-ta\ge 0, (d-1)(\gamma p-ta)+(d+1)(\gamma-\gamma p-t(1-a))\ge 0, t>0.
		\end{align*}
		As when $p\ge\frac{1}{2},$ $\rho_p$ is separable, we only consider $0<p<\frac{1}{2}.$ Due to the computation, the optimal $\sigma\in Sep$ in \ref{dsep} is $\rho_{\frac{1}{2}},$ we have
		\begin{align*}
				\hat{D}_{Sep}^{\mathbb{SEP}}(\rho_p)=\begin{cases} 
				\log\frac{d+1-2p}{2dp}\hspace{3mm}0<p<\frac{1}{2} \\
				0\hspace{20mm}p\ge\frac{1}{2}
			\end{cases}  .
		\end{align*}
	 
		\end{itemize}
		
	\end{proof}

\end{document}